\theoremstyle{plain}
\newtheorem{thm}{Theorem}
\newtheorem{assum}{Assumption}
\newtheorem{rem}{Remark}
\newtheorem{prop}{Proposition}
\newtheorem{defi}{Definition}
\newcommand{\Gset}{\mathbb{G}}
\newcommand{\Hset}{\mathbb{H}}
\newcommand{\Oset}{\mathbb{O}}
\newcommand{\Pset}{\mathbb{P}}
\newcommand{\Rset}{\mathbb{R}}
\newcommand{\Sset}{\mathbb{S}}
\newcommand{\Uset}{\mathbb{U}}
\newcommand{\Vset}{\mathbb{V}}
\newcommand{\Wset}{\mathbb{W}}
\newcommand{\Xset}{\mathbb{X}}
\newcommand{\Zset}{\mathbb{Z}}
\newcommand{\hXset}{\mathbb{\hat{X}}}
\newcommand{\bZset}{\mathbb{\bar{Z}}}
\newcommand{\tXset}{\mathbb{\tilde{X}}}
\newcommand{\hx}{{\hat{x}}}
\newcommand{\bv}{{\bar{v}}}
\newcommand{\bx}{{\bar{x}}}
\newcommand{\tg}{{\tilde{g}}}
\newcommand{\tildeh}{{\tilde{h}}}
\newcommand{\tw}{{\tilde{w}}}
\newcommand{\tx}{{\tilde{x}}}
\newcommand{\CC}{{\mathcal{C}}}
\newcommand{\FF}{{\mathcal{F}}}
\newcommand{\HH}{{\mathcal{H}}}
\newcommand{\MM}{{\mathcal{M}}}
\newcommand{\NN}{{\mathcal{N}}}
\newcommand{\PP}{{\mathcal{P}}}
\newcommand{\dist}[2]{\mbox{dist}({#1},{#2})}      
\newcommand{\abs}[1]{{|{#1}|}}                     
\newcommand{\norme}[2]{||{#1}||_{#2}}            
\newcommand{\ball}[1]{{B_{#1}}}                    
\newcommand{\convh}{\mbox{convh}}                  
\newcommand{\imply}{\Rightarrow}                   
\newcommand{\mbf}[1]{\mathbf{#1}}                  
\newcommand{\Zero}{\textbf{0}}
\newcommand{\subss}[2]{ #1_{[#2]} }              
\newcommand{\xp}{x^+}                              
\newcommand{\txp}{{\tilde{x}}^+}                   
\newcommand{\hxp}{{\hat{x}}^+}                     
\newcommand{\bkappa}{{\bar\kappa}}                 
\newcommand{\matr}[1]{
\begin{bmatrix}
    #1
\end{bmatrix}
}
\begin{document}
      \title{\LARGE \bf Plug-and-play fault diagnosis and control-reconfiguration for a class of nonlinear large-scale constrained systems\thanks{The research leading to these results has received funding from the European Union Seventh Framework Programme [FP7/2007-2013]  under grant agreement n$^\circ$ 257462 HYCON2 Network of excellence.}\thanks{Electronic address: \texttt{stefano.riverso@unipv.it}; Corresponding author}}

    \author[1]{Stefano Riverso%
       } 

     \author[2]{Francesca Boem%
       }

     \author[1]{Giancarlo Ferrari-Trecate%
       }
        
     \author[2,3]{Thomas Parisini%
       } 

     \affil[1]{Universit\`a degli Studi di Pavia, Italy}
     \affil[2]{Universit\`a degli Studi di Trieste, Italy}
     \affil[3]{Imperial College London, UK}
     
     \date{\textbf{Technical Report}\\ September, 2014}

     \maketitle

     \begin{abstract}
       This paper deals with a novel Plug-and-Play (PnP) architecture for the control and monitoring of Large-Scale Systems (LSSs). The proposed approach integrates a distributed Model Predictive Control (MPC) strategy with a distributed Fault Detection (FD) architecture and methodology in a PnP framework. The basic concept is to use the FD scheme as an autonomous decision support system: once a fault is detected, the faulty subsystem can be unplugged to avoid the propagation of the fault in the interconnected LSS. Analogously, once the issue has been solved, the disconnected subsystem can be re-plugged-in. PnP design of local controllers and detectors allow these operations to be performed safely, i.e. without spoiling stability and constraint satisfaction for the whole LSS. The PnP distributed MPC is derived for a class of nonlinear LSS and an integrated PnP distributed FD architecture is proposed. Simulation results show the effectiveness and the potential of the general methodology.
     \end{abstract}

     \newpage

     \section{Introduction}

          Nowadays, several man-made systems are characterized by a large number of states and inputs with a significant spatial distribution. This triggered an increasing interest in the study of Systems-of-Systems \cite{Samad2011} and Cyber-Physical Systems \cite{Baheti2011}. LSSs are often modeled as the interaction of many subsystems coupled through physical variables or communication channels \cite{Lunze1992}. When dealing with control of LSSs, centralized control architectures can be impractical due to computational, communication and reliability limits, and an alternative is offered by the adoption of decentralized and distributed approaches.

          In the past, several decentralized (De) and distributed (Di) MPC schemes have been proposed for constrained LSS (see the recent survey \cite{Christofides2013} and references therein). In the standard MPC control of LSSs, the prediction of the LSS behaviour is carried out through a nominal model of each subsystem and of the local interactions. However, in several applications, faults and malfunctions may occur thus possibly causing critical and unpredictable changes in the LSS dynamics. Hence, there is a need to devise fault diagnosis schemes (see, for example, \cite{Blanke2003,Isermann2006}) providing on-line the information about the health of the system and to exploit this information to reconfigure the controller so as to guarantee some degree of fault-tolerance (see \cite{Zhang2004}). Model-based schemes have emerged as prominent approaches to fault diagnosis of continuous and discrete-time systems \cite{Venkatasubramanian2003}. As for centralized control, centralized FD architectures suffer of scalability and robustness issues. In this context, decentralized and distributed fault-tolerant control and fault diagnosis algorithms have been proposed (see \cite{Patton2007}, \cite{Li2009}, \cite{Zhang2012}, \cite{Boem2011}, \cite{Boem2011b}, \cite{Ferrari2012} as examples).
 
     In this paper, the integration of a DiMPC scheme and a distributed FD architecture is proposed for the first time. Specifically, in the off-line control design phase we adopt a decentralized algorithm and we assume that the design of a local controller can use information at most from parents of the corresponding subsystem, i.e., subsystems that influence its dynamics. This implies that the whole model of the LSS is never used in any step of the synthesis process \cite{Lunze1992}. This approach has several advantages in terms of {\em scalability}: i) the communication flow at the design phase has the same topology of the coupling graph, usually sparse; ii) the local design of controllers and fault detectors can be conducted in parallel; iii) local design complexity scales with the number of parent subsystems only; iv) if a subsystem joins/leaves an existing network (plug-in/unplugging operation) at most children/parents subsystems have to retune their controllers and fault detectors. We refer to this kind of decentralized synthesis as PnP design, if, in addition, the plug-in and unplugging operations can be performed through a procedure for automatically assessing whether the operation does not spoil stability and constraint satisfaction for the overall LSS (see \cite{Riverso2013c} and \cite{Riverso2014a}). Different definitions of PnP design are given in \cite{Stoustrup2009}, \cite{Bendtsen2013} and \cite{Bodenburg2014}.
               
     \subsubsection*{Novelties}
          The very significant novelty presented in the paper is the {\em integration of DiMPC and FD architectures in a PnP framework}. Similarly to the design of local controllers, we propose a PnP design method for local fault detection. Motivations for PnP MPC/FD are the following: i) when the behaviour of a subsystem is corrupted by a fault, we show how the subsystem can be automatically disconnected while preserving stability and constraint satisfaction at each time instant for all other subsystems; ii) when a faulty subsystem is repaired, it can be replugged-in without changing all existing local controllers and fault detectors. We highlight that, differently from \cite{Riverso2013c} and \cite{Riverso2014a}, in this paper we design local MPC controllers for a class of nonlinear LSS. As regards FD schemes, to the best of the authors knowledge, it is the first time that a PnP FD architecture is proposed. Furthermore, in real application contexts, usually, MPC controllers are designed based on the knowledge of a nominal model of the system. Therefore a FD scheme is needed to monitor the behaviour of the system. The proposed FD architecture is robust to modeling and measurement uncertainties. To achieve this goal, it considers local models that are different from those used in local MPC controllers. In fact, another novel contribution of this paper is the possibility to use different decompositions and different modeling for the control and the diagnosis frameworks. This feature is useful for applications: local controllers must compute local control inputs based on local available measurements only, sometimes with high sampling rates; on the other hand local fault detectors may work at a different rate and can keep advantage of the redundancy given by sharing some variables in order to improve estimation performances. It is worth noting that, to the best of the authors knowledge, this is the first contribution addressing distributed schemes for nonlinear LSS integrating model-based fault diagnosis with MPC. For centralized approaches, the interested reader is referred to \cite{Prakash2005}, \cite{Raimondo2013} and the related work in \cite{Scott2014}. Moreover, a centralized reconfiguration process, based on hybrid systems, is proposed in \cite{Tsudal2001}.

          A preliminary version of this work, without any theoretical proofs and simulation examples, has been accepted at the 53rd IEEE Conference on Decision and Control \cite{Riverso2014b}.
          
          The paper is organized as follows. In Section~\ref{sec:systemdef}, we define the problem dealt with in the paper and we introduce the dual decomposition of the LSS. Then, in Section~\ref{sec:nonlinearsystube}, we design the nonlinear DiMPC architecture, while in Section \ref{sec:fault_architecture} we derive the PnP distributed FD scheme. The fault detectability analysis is presented in Section~\ref{sec:faulty}. The reconfiguration process after unplugging and plugging-in operations are described in Section \ref{sec:reconfiguration}. In Section~\ref{sec:simulationExample}, we apply the proposed architectures to a ring of coupled van der Pol Oscillators (vdPOs) and to a Power Network System (PNS). Finally, some concluding remarks are given in Section~\ref{sec:conclusion}. 

          \textbf{Notation.} We use $a:b$ for the set of integers $\{a,a+1,\ldots,b\}$. The column vector with $s$ components $v_1,\dots,v_s$ is $\mbf v=(v_1,\dots,v_s)$. The symbols $\oplus$ and $\ominus$ denote the Minkowski sum and difference, respectively, i.e. $A=B\oplus C$ if $A=\{a:a=b+c,\mbox{ for all }b\in B \mbox{ and }c\in C\}$ and $A=B\ominus C$ if $a\oplus C\subseteq B,~\forall a\in A$. Moreover, $\bigoplus_{i=1}^sG_i=G_1\oplus\ldots\oplus G_s$. For $\rho>0$, $\ball{\rho}(z)=\{x\in\Rset^n:\norme{x-z}{}\leq\rho\}$ where $\norme{\cdot}{}$ is the Euclidean norm in $\Rset^n$. Given a set $\Xset\subset\Rset^n$, $\convh(\Xset)$ denotes its convex hull. Function $\dist{v}{\Xset}$ denotes the distance among a vector $v$ and a set $\Xset$. The symbol $\Zero_r$ denotes a column vector in $\Rset^r$ with all elements equal to $0$. Let $v,~\bv\in\Rset^s$, the inequality $\abs{v}\leq \bar v$, component-wise means $\abs{v_i}\leq\bar v_i$, $i=1:s$.
          \begin{defi}[RCI set]
            \label{def:RCI}
            Consider the discrete-time linear system $x(t+1)=Ax(t)+Bu(t)+w(t)$, with $x(t)\in\Rset^n$, $u(t)\in\Rset^m$, $w(t)\in\Rset^n$ and subject to constraints $u(t)\in\Uset\subseteq\Rset^m$ and $w(t)\in\Wset\subset\Rset^n$. The set $\Xset\subseteq\Rset^n$ is an RCI set with respect to $w(t)\in\Wset$, if $\forall x(t)\in\Xset$ there exists $u(t)\in\Uset$ such that $x(t+1)\in\Xset$, $\forall w(t)\in\Wset$.
          \end{defi}

     \section{System definition}
          \label{sec:systemdef}
          Consider a class of discrete-time nonlinear LSSs composed of $M$ subsystems, using two different decompositions of the system structural graph (see Figure \ref{fig:decomposition}).
          \begin{figure}[!htb]
            \centering
            \includegraphics[scale=0.35]{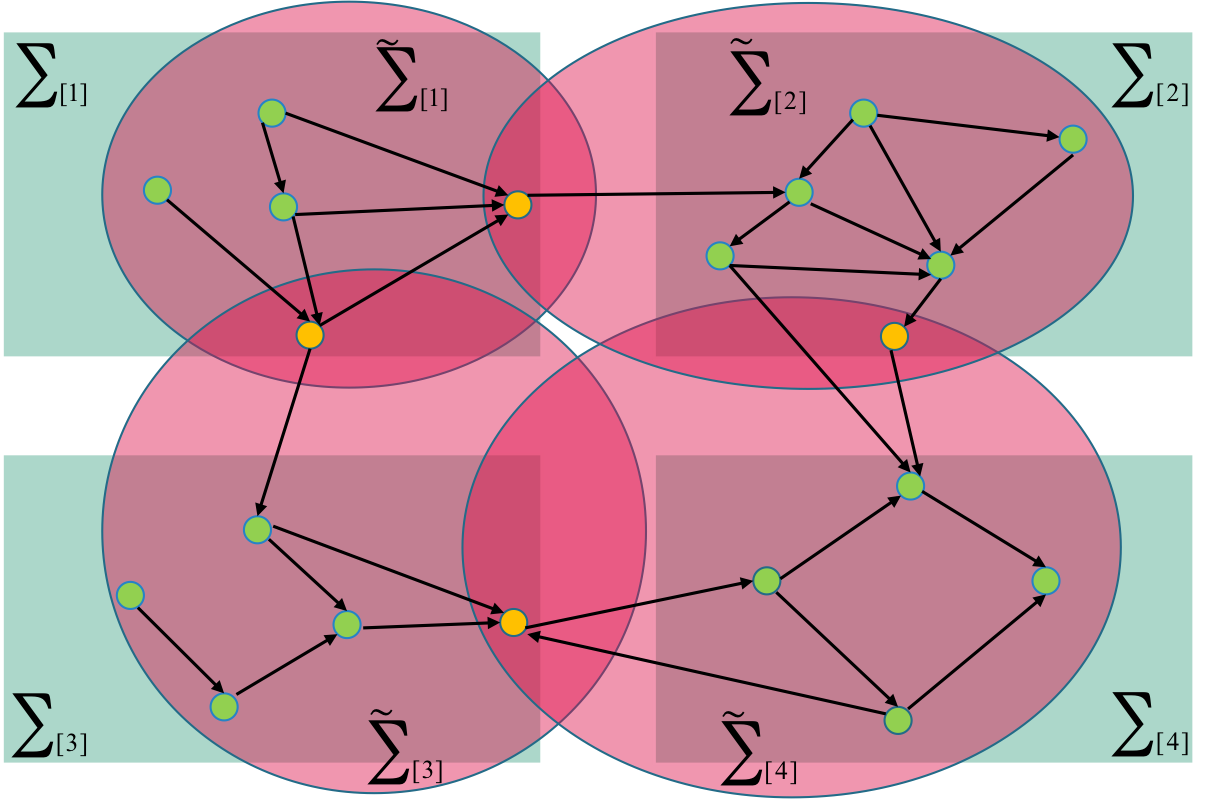}
            \caption{The two different decompositions of the LSS structural graph: the non-overlapping subsystems of the control architecture in blue and the overlapping subsystems of the fault diagnosis framework in red. The small circles represent the state and input variables; the yellow ones are the shared state variables. }
            \label{fig:decomposition}
          \end{figure}
          The control framework considers a model described by the following dynamics 
          \begin{equation}
            \label{eq:subsystem}
            \subss\Sigma i:\quad\subss \xp i=A_{ii}\subss x i+B_i [g_i(\subss x i,\subss \psi i)\subss u i + h_i(\subss x i,\subss \psi i)] + w_i(\subss\psi i)
          \end{equation}
          where $\subss x i\in\Rset^{n_i}$, $\subss u i\in\Rset^{m_i}$, $i\in\MM=\{1,\ldots,M\}$, are the local state and input, respectively, at time $t$ and $\subss\xp i$ stands for $\subss x i$ at time $t+1$. The $k$-th component of vector $\subss x i$ is specified by $\subss x {i,k}$. A similar notation is used for input and output variables. The vector of interconnection variables $\subss \psi i\in\Rset^{p_i}$ collects the states $\{\subss x j\}_{j\in\NN_i}$ that influence the dynamics of $\subss x i$, where $\NN_i$ is the set of parents of subsystem $i$ defined as $\NN_i=\{j\in\MM: \frac{\partial\subss\xp i}{\partial\subss x j}\neq \Zero_{n_i}, i\neq j\}$. We also define $\FF_{i}=\{k:i\in\NN_k\}$ as the set of children of $\Sigma_{[i]}$. $A_{ii}\in\Rset^{n_i\times n_i}$, $B_i\in\Rset^{n_i\times m_i}$, $i\in\MM$, $g_i(\cdot):\Rset^{n_i}\times\Rset^{p_i}\rightarrow\Rset$ and $h_i(\cdot):\Rset^{n_i}\times\Rset^{p_i}\rightarrow\Rset^{m_i}$, represent linear and possibly non-linear nominal dynamics. Nonlinear dynamics can also include known relationships with parent subsystems by means of the interconnection variables. The considered class of non-linear functions is general: the only constraint is the matched dependence on the control input. Instead, $w_i(\cdot):\Rset^{p_i}\rightarrow\Rset^{n_i}$ represents the unknown possibly nonlinear coupling among subsystems and includes also modeling uncertainties. We assume that the state vector is completely measurable. On the other hand, the distributed FD architecture monitors a state vector $\subss \tx i$ which is extended with respect to the controlled one, since in addition to $\subss x i$ it includes some variables $\subss x {j,s}$, $j\in\NN_i$ that it shares with parent subsystems. We call \emph{shared} variables of subsystem $i$ both the variables belonging to parents subsystems monitored also by subsystem $i$, and the variables of subsystem $i$ monitored by children subsystems.  Moreover, the FD architecture takes into account nominal model uncertainties. Therefore, the system dynamics considered by the $i$-th local diagnoser can be described as:          
          \begin{subequations}
            \label{eq:subsystemD}
            \begin{align}
              \label{eq:subsystemstate}\subss{\tilde\Sigma} i:\quad\subss \txp i=&\tilde A_{ii}\subss \tx i+\tilde B_i [\tg_i(\subss \tx i,\subss{\tilde \psi} i)\subss u i+\tildeh_i(\subss \tx i,\subss{\tilde \psi} i)]+\tw_i(\subss{\tilde \psi} i)+\phi_i(\subss \tx i,\subss {\tilde \psi} i,\subss u i,t)\\
              \label{eq:subsystemoutput}\subss y i=&\subss \tx i+\subss \varrho i
            \end{align}
          \end{subequations}
          where $\subss \tx i\in\Rset^{\tilde n_i}$, $\subss u i\in\Rset^{m_i}$, $\subss y i\in\Rset^{\tilde n_i}$ and $\subss\varrho i\in\Rset^{\tilde n_i}$, $i\in\MM$, are the local state, input, output and unknown measurement error, respectively, for diagnosis purposes. The vector of interconnection variables $\subss{\tilde \psi} i\in\Rset^{\tilde p_i}$ collects the variables $\subss{\psi} i$ except the variables shared by parent subsystems with the $i$-th subsystem.  As a consequence the state matrix $A_{ii}$ is extended to $\tilde A_{ii}$ to describe the linear dynamics of the state $\subss \tx i$, and similarly $B_i$ and functions $\tg_i$, $\tildeh_i$ and $\tw_i$, $i\in\MM$. Instead, the function $\phi_i(\cdot):\Rset^{\tilde n_i}\times\Rset^{\tilde p_i}\times\Rset^{m_i}\times\Rset\rightarrow\Rset^{\tilde n_i}$ represents the fault-function, capturing deviations of the dynamics of $\tilde \Sigma_i$ from the nominal healthy dynamics. Note that $\subss\tx i$ and $\subss{\tilde\psi}i$ are defined in a way such that computing the lhs of \eqref{eq:subsystemD} requires at most information from subsystems $\subss{\Sigma}j$, $j\in\NN_i$. In other words only transmission of information from parent to child subsystems is required.\\
          We consider the following assumptions.
          \begin{assum}
            \label{ass:standardAssumCtrl}
            \begin{enumerate}[(I)]
            \item \label{ass:localcontrollability}The pair $(A_{ii},B_i)$ is stabilizable, $\forall i\in\MM$.
            \item \label{ass:shapesets} Subsystems $\subss\Sigma i$, $i\in\MM$ are subject to the constraints
              \begin{equation}
                \label{eq:local_constraints_ctrl}
                \subss x i \in \Xset_i,~\subss u i \in \Uset_i,~\subss \varrho i \in \Oset_i
              \end{equation}
              where $\Xset_i$, $\Uset_i$ and $\Oset_i$ are compact, convex and contain the origin in their nonempty interior. Constraints \eqref{eq:local_constraints_ctrl} also induce suitable state constraints on $\subss{\tilde\Sigma} i$, $i\in\MM$, namely $\tXset_i$. Similarly, we denote with $\Psi_i$ (resp. $\tilde\Psi_i$) constraints induced on interconnection variables $\subss\psi i$ (resp. $\subss{\tilde\psi}i$).
            \item \label{ass:couplinglimits} Functions $w_i(\cdot)$ are bounded for all $i\in\MM$, i.e. there are bounded sets $\Wset_i\subset\Rset^{n_i}$ such that $w_i(\Psi_i)\subseteq\Wset_i$. Moreover if $\bar\Psi_i\subset\hat\Psi_i$ then $w_i(\bar\Psi_i)\subset w_i(\hat\Psi_i)$.
            \item Functions $g_i(\subss x i,\subss \psi i)$ are invertible for all $\subss x i\in\Xset_i$ and $\subss\psi i\in\Psi_i$.
            \item The measurement error $\subss \varrho i$ is bounded for all $i\in\MM$ at each time $t$, i.e. $| \subss \varrho i | \le \subss {\bar\varrho} i$ component-wise.
            \end{enumerate}           
          \end{assum}
          
          Now, let us provide a formal characterization of the system's decomposition already described in qualitative terms.

          \begin{defi}[\cite{Lunze1992}]
            A decomposition of the LSS into subsystems $\subss\Sigma i$, $i\in\MM$ is said \emph{non-overlapping} if no state variables are shared between subsystems. 
            Otherwise, the decomposition is termed \emph{overlapping}.
          \end{defi}
          
          In this section, we have introduced the models and the two different decompositions of the LSS we are going to consider. For what concerns the control architecture, a \emph{non-overlapping} decomposition is defined, so that each state component is controlled by only one local controller. On the other hand, an \emph{overlapping} decomposition is proposed for the FD framework, which implies that the shared state variables may be monitored by more than one local diagnosers. In the following sections, we explain how to design a control and a FD architectures suitable for a PnP framework.

     \section{Nonlinear tube-based distributed MPC}
          \label{sec:nonlinearsystube}
          
          In this section, we illustrate the proposed distributed tube-based MPC controller. We design the controller so that it is able to guarantee stability of the LSS interconnected subsystems both during the healthy behaviour (when no faults are acting on the LSS) and during the reconfiguration process (when a faulty subsystem is detected and subsequently unplugged). More specifically, we derive the DiMPC controller such that it preserves overall feasibility and stability even when a subsystem is disconnected.\\
          Concerning the control architecture, we consider a non-overlapping decomposition of the LSS. Note that, in order to design the local controllers, the model in \eqref{eq:subsystem} is used where $w_i(\cdot)$ represents coupling terms only. In the following, we propose a distributed controller that can be designed in a PnP fashion by treating parent subsystems as bounded disturbances. To this purpose, as in \cite{Rakovic2006}, we define a nominal model for each subsystem
          \begin{equation}
            \label{eq:NLnominalsubsystem}
            \subss{\hat\Sigma} i:\quad\subss \hxp i=A_{ii}\subss\hx i+B_i\subss v i
          \end{equation}
          where $\subss v i$ is the input. As in \cite{Rakovic2006} our goal is to relate inputs $\subss v i$ in \eqref{eq:NLnominalsubsystem} to $\subss u i$ in \eqref{eq:subsystem} and compute sets $\Zset_i\subseteq\Rset^{n_i}$, $i\in\MM$ such that
          \begin{equation}
            \label{eq:Zinvariance}
            \subss x i(0)\in\subss\hx i(0)\oplus\Zset_i\Rightarrow\subss x i(t)\in\subss\hx i(t)\oplus\Zset_i,~\forall t\geq 0.
          \end{equation}
          In other terms, as in \cite{Riverso2013c} and \cite{Riverso2014a}, we want to confine $\subss x i(t)$ in a tube around $\subss\hx i(t)$ of section $\Zset_i$. Assume that if $\subss x i\in\Zset_i$ there exists $\subss u i=\bkappa_i(\subss x i):\Zset_i\rightarrow\Uset_{i}$ such that $\subss\xp i\in\Zset_i$, $\forall\subss x j\in\Xset_j$, $j\in\NN_i$. Therefore if $\subss x i\in\subss\hx i\oplus\Zset_i$ and the controller
          \begin{equation}
            \label{eq:NLtubecontrol}
            \begin{aligned}
              \subss{\CC}i:\quad\subss u i=g_i(\subss x i,\subss\psi i)^{-1}[h_i(\subss x i,\subss\psi i)+\subss v i+\bkappa_i(\subss x i-\subss\bx i)]
            \end{aligned}
          \end{equation}
          is used, then, for all $\subss v i$, we have $\subss\xp i\in\subss\hxp i\oplus\Zset_i$. Controller $\subss{\CC}i$ is based on the well-known idea of ``canceling" the nonlinearities in the state equations. This is possible because in \eqref{eq:subsystem} the nonlinear terms are ``matched'', i.e. they can be directly modified through the control input $\subss u i$ \cite{Pappas1995}. 
          \begin{rem}
           We highlight that the proposed controller can be easily generalized to robust PnP MPC controllers when $w_i(\cdot)$ represents both coupling terms and model uncertainties. We refer the interested reader to Chapter 7 of \cite{Riverso2014} where robustness has been studied for linear LSSs.
          \end{rem}
          We note that controller $\subss{\CC}i$ is {\em distributed} since it depends on the state variables of parent subsystems by means of the interconnection variables. Following \cite{Rakovic2006}, the next goal is to compute tightened constraints $\hXset_i\subseteq\Xset_i$ and $\Vset_i\subseteq\Uset_i$ in order to guarantee that 
          $$\subss\hx i\in\hXset_i\mbox{ and }\subss v i\in\Vset_i \imply \subss \xp i\in\Xset_i\mbox{ and }\subss u i\in\Uset_i,$$
          at all time instants. Tightened state constraints must satisfy the following inclusions
          \begin{subequations}
            \label{eq:tightconstraint}
            \begin{align}
              \label{eq:tightstateconstraint}&\hXset_i\oplus\Zset_i\subseteq\Xset_i\\
              \label{eq:tightinputconstraint}&\Gset_i\left( \Hset_i\oplus\Vset_i \oplus \Uset_{z_i} \right) \subseteq \Uset_i
            \end{align}
          \end{subequations}
          where $\Gset_i=g_i(\Xset_i,\Psi_i)^{-1}$ and $\Hset_i=h_i(\Xset_i,\Psi_i)$. Obviously, as in nonlinear tube-based MPC theory, the evaluation of sets $\Gset_i$ and $\Hset_i$ can be very challenging. Estimates of these sets can be obtained using methods of reachability analysis for nonlinear systems, as those discussed in \cite{Raimondo2012}. Therefore, since we want to stabilize the nominal subsystems \eqref{eq:NLnominalsubsystem} and to guarantee satisfaction of tightened state constraints, we need to solve online the following {\em local} MPC problem $\Pset_i^N(\subss x i(t))$:
          \begin{subequations}
            \label{eq:decMPCProblem}
            \begin{align}
              &\label{eq:costMPCProblem}\min_{\substack{\subss\hx i(0)\\\subss v i(0:N_i-1)}}\sum_{k=0}^{N_i-1}\ell_i(\subss\hx i(k),\subss v i(k))+V_{f_i}(\subss\hx i(N_i))
            \end{align}
            \begin{align}
              &\label{eq:inZproblem}\subss x i(t)-\subss \hx i(0)\in\Zset_i\\
              &\label{eq:dynproblem}\subss \hx i(k+1)=A_{ii}\subss \hx i(k)+B_i\subss v i(k)& k\in 0:N_i-1 \\
              &\label{eq:inhXVproblem} \subss \hx i(k)\in\hXset_i,~ \subss v i(k)\in\Vset_i & k\in 0:N_i-1\\ 
              &\label{eq:inTerminalSet}\subss \hx i(N_i)\in{{\hat{\Xset}}}_{f_i}
            \end{align}
          \end{subequations}
          In \eqref{eq:decMPCProblem}, $N_i>0$ is the control horizon, $\ell_i(\cdot):\Rset^{n_i\times m_i}\rightarrow\Rset_{0+}$ is the stage cost, $V_{f_i}(\cdot):\Rset^{n_i}\rightarrow\Rset_{0+}$ is the final cost and $\hXset_{f_i}$ is the terminal set. Furthermore, following \cite{Rakovic2006}, in \eqref{eq:NLtubecontrol} we set
          \begin{equation}
            \label{eq:def_kappa_eta}
            \begin{aligned}
              \subss v i(t) = \subss v i(0|t),\qquad\subss\bx i (t) = \subss\hx i(0|t)
            \end{aligned}
          \end{equation}
          where $\subss v i(0|t)$ and $\subss\hx i(0|t)$ are optimal values of the variables $\subss v i(0)$ and $\subss\hx i(0)$ in the MPC-$i$ problem \eqref{eq:decMPCProblem}. Note that in \eqref{eq:def_kappa_eta} we defined the variable $\subss\bx i$ depending on the nominal state $\subss\hx i$, i.e. the state of the dynamics of the subsystem $\subss{\Sigma}i$ without coupling terms. Note also that the re-definition of $\subss\bx i$ as in \eqref{eq:def_kappa_eta} is at the core of the tube-MPC scheme proposed in \cite{Rakovic2006}. Algorithm \ref{alg:pnpcontrollers} summarizes the steps needed for computing function $\bkappa_i(\cdot)$ in \eqref{eq:NLtubecontrol}, sets $\Zset_i$, $\Uset_{z_i}$, $\hXset_i$, $\Vset_i$, $\hXset_{f_i}$ and functions $\ell_i(\cdot)$ and $V_{f_i}(\cdot)$.

          \begin{algorithm}[!htb]
            \caption{Design of controller $\subss\CC i$ for subsystem $\subss \Sigma i$}
            \label{alg:pnpcontrollers}
            \textbf{Input}: $A_{ii}$, $B_i$, $\Xset_i$, $\Uset_i$, $g_i(\cdot)$, $h_i(\cdot)$, $w_i(\cdot)$, $\NN_i$\\
            \textbf{Output}: controller $\subss{\CC}i$\\
            \begin{enumerate}[(I)]
            \item Send sets $\Xset_i$ to child subsystems $j\in\FF_i$
            \item Receive sets $\Xset_j$ from parent subsystems $j\in\NN_i$
            \item \label{enu:AssOmegai}Compute the set 
              \begin{equation}
                \label{eq:ch9:disturbanceControlSet}
                \begin{aligned}
                  \Wset_i&=w_i(\Psi_i)
                \end{aligned}
              \end{equation}
              and choose $\bZset_i^0$ such that $\Xset_i\supseteq\bZset_i^0\supseteq\Wset_i\oplus\ball{\omega_i}(0)$ for a sufficiently small $\omega_i>0$. If $\bZset_i^0$ does not exist, then \textbf{stop} (the controller $\subss{\CC}i$ cannot be designed)
            \item\label{enu:ch9:rciAlg} Check the LP feasibility condition in Step (ii) of Algorithm 1 in \cite{Riverso2014a}. If it is not verified, then \textbf{stop} (the controller $\subss{\CC}i$ cannot be designed)
            \item\label{enu:ch9:hXVsetAlg} Execute Steps (iii) and (iv) of Algorithm 1 in \cite{Riverso2014a}. They provide the MPC-$i$ problem and the function $\bkappa_i(\cdot)$ defined as in (25) in \cite{Riverso2014a}
            \end{enumerate}
          \end{algorithm}

          Steps (\ref{enu:ch9:rciAlg}) and (\ref{enu:ch9:hXVsetAlg}) of Algorithm \ref{alg:pnpcontrollers}, that provide constraints in \eqref{eq:tightconstraint}, are the most computationally expensive because involve Minkowski sums and differences of polytopic sets. The interested reader is referred to Sections 3.1-3.3 in \cite{Riverso2014a}, where we show how to avoid burdensome computations exploiting results from \cite{Rakovic2010} and how to compute a suitable function $\bar \kappa _i$ in \eqref{eq:NLtubecontrol} through LP. We also highlight that Step (\ref{enu:ch9:rciAlg}) is the core of the algorithm: by checking the LP feasibility condition in Step (ii) of Algorithm 1 in \cite{Riverso2014a}, we are able to verify if there exists a set $\Zset_i$ guaranteeing \eqref{eq:Zinvariance}. This is possible using a suitable parameterization of the RCI set $\Zset_i$, as proposed in \cite{Rakovic2010}.

          Next, we give the main results on stability and constraints satisfaction for the network of subsystems controlled by distributed controllers $\subss{\CC}i$.
          \begin{thm}
            \label{thm:ch9:mainclosedloop}
            Let Assumption~\ref{ass:standardAssumCtrl} hold. Assume state-feedback controllers $\subss\CC i$ are computed using Algorithm \ref{alg:pnpcontrollers} and define $\mbf x(t) = (\subss x 1,\ldots,\subss x M)$. Let $\Xset_i^N=\{\subss s i\in\Xset_i:~\mbox{\eqref{eq:decMPCProblem} is feasible for}~\subss x i(t)=\subss s i\}$ be the feasibility region for the MPC-$i$ problem and $\Xset^N=\prod_{i\in\MM}\Xset^N_i$. Then, the origin of the closed-loop system is asymptotically stable. Moreover, $\Xset^N$ is a region of attraction for the origin and $\mbf x (0)\in\Xset^N$ guarantees state and input constraints are fulfilled at all time instants.
          \end{thm}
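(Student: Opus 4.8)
The result is the natural large‑scale generalization of the standard tube‑MPC stability theorem of \cite{Rakovic2006} (and its PnP counterpart in \cite{Riverso2014a}) to the matched‑nonlinearity setting, so I would mirror that architecture. The key observation, already embedded in the construction of $\subss{\CC}i$ in \eqref{eq:NLtubecontrol}, is that feedback‑linearizing the matched terms reduces each controlled subsystem to the linear nominal system \eqref{eq:NLnominalsubsystem} driven by a bounded disturbance that lives in the RCI tube section $\Zset_i$. Hence, once this reduction is in place, the proof is essentially the one for \emph{linear} distributed tube‑MPC, and I would lean on the earlier results accordingly.

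\textbf{Step 1 (Tube invariance / constraint satisfaction).} First I would verify that Algorithm~\ref{alg:pnpcontrollers}, via Step~\ref{enu:ch9:rciAlg}, produces $\Zset_i$ satisfying \eqref{eq:Zinvariance}: by construction $\Zset_i$ is RCI for $\subss\hx i(t+1)=A_{ii}\subss\hx i(t)+B_i\subss v i(t)+w_i(\subss\psi i)$ under the local auxiliary law $\bkappa_i$, with $w_i(\Psi_i)\subseteq\Wset_i\subseteq\bZset_i^0$ (Assumption~\ref{ass:standardAssumCtrl}(III)). Plugging $\subss u i$ from \eqref{eq:NLtubecontrol} into \eqref{eq:subsystem} and using invertibility of $g_i$ (Assumption~\ref{ass:standardAssumCtrl}(IV)) cancels $g_i,h_i$, leaving $\subss\xp i-\subss\hxp i = A_{ii}(\subss x i-\subss\hx i)+\bkappa_i(\subss x i-\subss\bx i)+w_i(\subss\psi i)$, and with the MPC reset $\subss\bx i(t)=\subss\hx i(0|t)$ from \eqref{eq:def_kappa_eta} this is exactly the RCI recursion, so $\subss x i(t)\in\subss\hx i(t)\oplus\Zset_i$ for all $t$ once it holds at $t=0$ (which is forced by the constraint \eqref{eq:inZproblem}). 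Then the tightenings \eqref{eq:tightstateconstraint}–\eqref{eq:tightinputconstraint} give $\subss x i(t)\in\Xset_i$ and $\subss u i(t)\in\Uset_i$ for all $t$; note $\subss x j\in\Xset_j$ for parents $j\in\NN_i$ is what legitimizes $w_i(\subss\psi i)\in\Wset_i$, so this is a simultaneous induction over all subsystems.

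\textbf{Step 2 (Recursive feasibility and cost decrease for each nominal MPC).} Next, for each $i$, I would run the classical MPC argument on the \emph{nominal}, decoupled system \eqref{eq:NLnominalsubsystem}: from a feasible solution at time $t$, the shifted sequence appended with the terminal control law is feasible at $t+1$ because $\hXset_{f_i}$ is a positively invariant terminal set for the terminal controller with $(\ell_i,V_{f_i})$ satisfying the usual Lyapunov decrease on $\hXset_{f_i}$ — all of this is inherited verbatim from Step~\ref{enu:ch9:hXVsetAlg} of Algorithm~\ref{alg:pnpcontrollers} and the corresponding construction (25) and Steps (iii)–(iv) of Algorithm~1 in \cite{Riverso2014a}. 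This yields that the optimal value function $V_i^N(\subss\hx i(0|t))$ is a Lyapunov function for the nominal closed loop $\subss\hxp i=A_{ii}\subss\hx i+B_i\subss v i(0|t)$, driving $\subss\hx i(t)\to 0$, hence $\subss v i(0|t)\to 0$. Crucially, because $\subss v i$ in \eqref{eq:decMPCProblem} is optimized over the \emph{undisturbed} dynamics and $\Zset_i$ enters only through the initial constraint \eqref{eq:inZproblem}, the couplings never appear in the optimization: each $\Pset_i^N$ is genuinely decoupled and the per‑subsystem analyses compose without a small‑gain or contraction condition across subsystems.

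\textbf{Step 3 (Aggregate asymptotic stability).} Finally I would assemble the pieces: $\subss\hx i(t)\to 0$ from Step~2, and $\subss x i(t)\in\subss\hx i(t)\oplus\Zset_i$ from Step~1, so $\subss x i(t)$ converges to $\Zset_i$; to upgrade convergence‑to‑a‑set into asymptotic stability of the \emph{origin} of $\mbf x(t)$, I would use that $0\in\Zset_i$ together with the fact that, as $\subss\hx i\to0$, the auxiliary feedback $\bkappa_i$ steers the error $\subss x i-\subss\hx i$ asymptotically to $0$ as well — this is the standard tube‑MPC conclusion (the composite Lyapunov‑type argument combining $V_i^N$ with a Lyapunov function for $A_{ii}+B_i K_i$ on $\Zset_i$), exactly as in \cite{Rakovic2006}. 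Summing over $i\in\MM$ gives a Lyapunov function for the whole LSS, Lyapunov stability at $\mbf 0$, and attractivity from every $\mbf x(0)\in\Xset^N=\prod_i\Xset_i^N$; feasibility of $\Pset_i^N$ at $t=0$ for each $i$ is precisely $\mbf x(0)\in\Xset^N$, and Step~2 propagates it, so $\Xset^N$ is a region of attraction. Combined with Step~1 this also gives constraint satisfaction for all $t$, completing the claim.

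\textbf{Main obstacle.} The delicate point is not any single estimate but making the \emph{simultaneous} induction in Step~1 airtight: tube invariance for subsystem $i$ presupposes $\subss\psi i\in\Psi_i$, i.e. that all parents already satisfy their state constraints at the same time $t$, which in turn presupposes their own tube invariance. One must therefore prove the conjunction ``$\subss x i(t)\in\subss\hx i(t)\oplus\Zset_i$ and $\subss x i(t)\in\Xset_i$ for all $i$'' as a single induction hypothesis over $t$, using Assumption~\ref{ass:standardAssumCtrl}(III) (the monotonicity $\bar\Psi_i\subset\hat\Psi_i\Rightarrow w_i(\bar\Psi_i)\subset w_i(\hat\Psi_i)$ and boundedness) to close the loop; the absence of coupling in the cost is what prevents this from requiring any interconnection‑gain condition, and is worth stating explicitly.
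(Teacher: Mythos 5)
Your Steps 1 and 2 match the paper's setup: feedback-linearize the matched terms, confine $\subss x i$ to the tube $\subss\hx i\oplus\Zset_i$, get constraint satisfaction from the tightenings, and obtain $\subss\hx i(0|t)\to 0$, $\subss v i(0|t)\to 0$ from the nominal MPC argument. The genuine gap is in Step 3, which is exactly where the paper spends its entire appendix. Under a persistent bounded disturbance, standard tube MPC does \emph{not} conclude that the error $\subss x i-\subss\hx i$ converges to zero; it only yields convergence of $\subss x i$ to the set $\Zset_i$. Here the ``disturbance'' is the coupling $w_i(\subss\psi i)$, which vanishes only when the parents' states vanish, so your claim that ``as $\subss\hx i\to 0$ the auxiliary feedback $\bkappa_i$ steers the error to $0$'' is circular: it presupposes precisely the convergence of the neighbouring subsystems that you are trying to prove. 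Likewise, your assertion that no interconnection-gain condition is needed is contradicted by the mechanism the paper actually uses.

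What the paper does instead is a contraction argument yielding \emph{finite-time entry} of $\subss x i$ into $\Zset_i$, followed by further contraction inside the tube. Once $\subss\hx i(0|t)$ is small, $\subss x i(t)\in(1+\delta_z)\Zset_i$ for all $i$, so the coupling is evaluated on the shrunken set $\hat\Psi_i$ rather than on all of $\Psi_i$; the monotonicity in Assumption~\ref{ass:standardAssumCtrl}-(\ref{ass:couplinglimits}) then gives $w_i(\hat\Psi_i)\subseteq\xi_i\Wset_i$ with $\xi_i<1$. Combined with the margin $\bZset_i^0\supseteq\Wset_i\oplus\ball{\omega_i}(0)$ built into Step~(\ref{enu:AssOmegai}) of Algorithm~\ref{alg:pnpcontrollers} and with the vanishing mismatch term $\bar\eta_i(t)=B_i(\subss v i+\bkappa_i(\subss x i-\subss\hx i(0|t))-\bkappa_i(\subss x i))$ (controlled via Lipschitz continuity of the piecewise-affine $\bkappa_i$, a point your proposal does not address), this yields $\subss\xp i\in(1+\delta_z)\mu_i\Zset_i$ with $(1+\delta_z)\mu_i<1$, hence $\subss x i(\tilde T)\in\Zset_i$ for some finite $\tilde T$. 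Only then do the RCI property of $\prod_{i\in\MM}\Zset_i$ and the iterated contraction give asymptotic stability of the origin. The strict inclusion $w_i(\hat\Psi_i)\subset\Wset_i$ is a small-gain-type condition encoded in the design (the existence of $\bZset_i^0$ and the LP feasibility check), not something that comes for free from the decoupled costs.
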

          \begin{proof} 
            The proof of Theorem \ref{thm:ch9:mainclosedloop} is given in Appendix \ref{sec:prooftheoremctrl}.
          \end{proof}
          \begin{rem}
            Notice that Algorithm \ref{alg:pnpcontrollers} provides a decentralized procedure for designing distributed PnP regulators and that it can be executed in parallel for all subsystems. Therefore, as shown in \cite{Riverso2013c,Riverso2014a} and as we will see jointly with the FD architecture presented in Sections \ref{sec:reconfiguration} and \ref{sec:plugin}, plug-in or unplugging operations involve only the update of a limited number of controllers. Differently from \cite{Riverso2013c} and \cite{Riverso2014a} (where only linear subsystems have been considered), we highlight that the proposed regulator allows to control subsystems described by matched nonlinearities and nonlinear couplings with parents.
          \end{rem}

     \section{The Fault Detection Architecture}
          \label{sec:fault_architecture}
          In this section, we design a distributed FD architecture for the considered PnP framework. Each subsystem is equipped with a local diagnoser. According to the classical model-based FD approach, an estimate $\subss {\hat {\tx}} i$ of the local state variables is defined; the estimation error $\subss \epsilon i \triangleq \subss y i - \subss {\hat {\tx}} i$ is compared component-wise with a suitable time-varying detection threshold $\subss{\bar{\epsilon}}{i}\in\Rset_+^{\tilde n_i}$, hence obtaining a local fault decision classifying the status of the subsystem either as healthy or faulty. If the residual crosses the threshold, we can conclude that a fault has occurred. The condition $\abs{\subss{\epsilon}{i,k}(t)}\leq\subss{\bar\epsilon}{i,k}(t),\forall k=1:\tilde{n}_i$ is a necessary (but generally not sufficient) condition for the hypothesis $\HH_i:\mbox{ ``Subsystem } \subss{\tilde{\Sigma}} i\mbox{ is healthy''}$. If the condition is violated at some time instant, then the hypothesis $\HH_i$ is falsified.
          In the PnP framework, the diagnosers are designed so to guarantee the absence of false alarms and the convergence of the estimator error both during healthy conditions and during the reconfiguration process: the healthy subsystems diagnosers have to continue to work properly also when the faulty subsystem(s) is (are) unplugged and then plugged-in after problem solution.
          \subsection{The Fault Detection Estimator}
               \label{sec:fdae}
               For detection purposes, each subsystem is equipped with a local nonlinear estimator, based on the local model $\subss{\tilde\Sigma} i$ in \eqref{eq:subsystemD}. 
               The $k$-th non-shared state variable of $\subss{\tilde\Sigma}i$ can be estimated as
               \begin{equation*}
                 \subss {\hat {\tilde{x}}^+} {i,k} = \lambda(\subss {\hat {\tx}} {i,k}-\subss y {i,k})+\tilde A_{ii,k}\subss y i + \tilde B_{i,k}[\tilde g_i(\subss y i,\subss{z} i)\subss u i +\tilde h_i(\subss y i,\subss{z} i)],
               \end{equation*}
               where the filter parameter is chosen in the interval $0<\lambda<1$ in order to guarantee convergence properties, $\subss{z} i=\subss{\tilde\psi} i+\subss\theta i$ is the vector of measured interconnection variables available for diagnosis, $\subss\theta i$ collects the involved measurement error $\subss\varrho j$, $j\in\NN_i$, $\tilde A_{ii,k}$ and $\tilde B_{i,k}$ are the $k$-th row of matrices $\tilde A_{ii}$ and $\tilde B_i$, respectively. Using shared variable $\subss \tx {i,k_i}=\subss \tx {j,k_j}$, where $k_i$ and $k_j$ are the $k_i$-th and $k_j$-th components of vectors $\subss \tx i$ and $\subss \tx j$, respectively, we can take advantage of the redundancy by using a kind of deterministic consensus protocol (see \cite{Ferrari2012,Boem2011}). In the following, $\Sset^k$ is the set of subsystems $\subss{\tilde\Sigma}i$ sharing a given state variable $k$ of the LSS. The estimates of shared variables are provided by
               \begin{multline}
                 \label{eq:shared_state_dyn}
                 \subss {\hat {\tilde{x}}^+} {i,k_i} = \lambda(\subss {\hat {\tx}} {i,k_i}-\subss y {i,k_i})+ \sum_{j\in\Sset^k} W_{i,j}^k\left[\subss{\hat {\tx}}{j,k_j}-\subss{\hat {\tx}}{i,k_i}+\tilde A_{jj,k_j}\subss y j + \tilde B_{j,k_j}[\tilde g_j(\subss y j,\subss{z} j)\subss u j +\tilde h_j(\subss y j,\subss{z} j)]\right]
                 \end{multline}
               where $W_{i,j}^k$ are the components of a row-stochastic matrix $W^k$, which will be defined in Subsection~\ref{sec:consensus}, designed to allow plugging-in and unplugging operations. By now, notice that $W^k$ collects the consensus weights used by $\subss{\tilde\Sigma} i$ to weight the terms communicated by $\subss{\tilde\Sigma} j$, with $j\in\Sset^k$. In fact, as regards variables estimation, each subsystem communicates with parents and children subsystems sharing that variable. We also note that \eqref{eq:shared_state_dyn} holds also for the case of non-shared variables, since, in this case, $\Sset^k=\{i\}$, and  $W_{i,i}^k=1$ by definition. In the following, for the sake of simplicity, we omit the subscript of the shared component index $k$ $\subss \tx {i,k}$ instead of $\subss \tx {i,k_i}$.
          
          \subsection{The detection threshold}
               \label{sec:threshold}
               In order to define an appropriate threshold for FD, we analyze the dynamics of the local diagnoser estimation error when the subsystem is healthy. Defining $W^k$ such that $\sum_{j\in\Sset^k} W_{i,j}^k=1$ and since for shared variables $\forall i, j\in\Sset^k$ it holds
               \begin{equation*}
                 \tilde A_{ii,k}\subss \tx i + \tilde B_{i,k}[\tilde g_i(\subss \tx i,\subss{\tilde\psi} i)\subss u i +\tilde h_i(\subss y i,\subss{z} i)]=\tilde A_{jj,k}\subss \tx j + \tilde B_{j,k}[\tilde g_j(\subss \tx j,\subss{\tilde\psi} j)\subss u j +\tilde h_j(\subss y j,\subss{z} j)],
               \end{equation*}
               the $k$-th state estimation error dynamics is given by
               \begin{multline*}
                 \subss{\epsilon^+}{i,k} = \sum_{j\in\Sset^k} W_{i,j}^k\left[ \lambda \subss{\epsilon}{j,k}- \tilde A_{jj,k}\subss\varrho j + w_{j,k}(\subss {\tilde\psi} j) + \tilde B_{j,k}(\Delta \tg_{j,k}\subss u j+\Delta \tildeh_{j,k})-\lambda\subss\varrho{j,k} \right] + \lambda\subss\varrho{i,k} + \subss{\varrho^+}{i,k} \, ,
               \end{multline*}
               where $\Delta \tg_{j,k} \triangleq \tilde g_{j,k}(\subss \tx j,\subss {\tilde\psi} j)-\tilde g_{j,k}(\subss y j,\subss{z}j)$ and $\Delta \tildeh_{j,k} \triangleq \tilde h_{j,k}(\subss \tx j,\subss {\tilde\psi} j)-\tilde h_{j,k}(\subss y j,\subss{z}j)$.

               As in \cite{Ferrari2012}, using the triangular inequality, we can bound the estimation error, guaranteeing no false-positive alarms.  By taking the absolute value of $\subss{\epsilon^+}{i,k}$ component-wise, we get
               \begin{multline*}
                 \abs{\subss{\epsilon^+}{i,k}} \leq \sum_{j\in\Sset^k} W_{i,j}^k\left[ \lambda \abs{\subss{\epsilon}{j,k}} + \abs{\tilde A_{jj,k}\subss\varrho j} + \lambda\abs{\subss\varrho{j,k}} + \abs{\tilde B_{j,k}(\Delta \tg_{j,k}\subss u j+\Delta \tildeh_{j,k})}+\abs{w_{j,k}(\subss {\tilde\psi} j)} \right]\\ + \lambda\abs{\subss\varrho{i,k}} + \abs{\subss{\varrho^+}{i,k}} \, .
               \end{multline*}

               Therefore, we define the following time-varying threshold $\subss{\bar\epsilon}{i,k}$ that {\em can be computed in a distributed way} as
               \begin{multline}
                 \label{eq:threshold}
                 \subss{\bar\epsilon^+}{i,k} = \sum_{j\in\Sset^k} W_{i,j}^k\left[ \lambda \subss{\bar\epsilon}{j,k} + \left|\tilde A_{jj,k}\right|\subss{\bar\varrho}j + \bar w_{j,k}(\subss {z} j) + \left|\tilde B_{j,k}\right| {(\Delta \bar g_{j}}\left|\subss u j\right|+\Delta\bar h_{j})+\lambda\subss{\bar\varrho}{j,k} \right] + \lambda\subss{\bar\varrho}{i,k} + \subss{\bar\varrho^+}{i,k}
               \end{multline}
               where ${\Delta \bar g_{j}}=\max_{\subss \tx j\in\tXset_j,\subss {\tilde\psi} j\in\tilde\Psi_j} \abs{\Delta \tilde g_j(t)}$ and ${\Delta \bar h_{j}}=\max_{\subss \tx j\in\tXset_j,\subss {\tilde\psi} j\in\tilde\Psi_j} \norme{\Delta \tilde h_j(t)}{\infty}$.  It is worth noting that Assumption~\ref{ass:standardAssumCtrl} implies that the state and input variables are bounded; hence all quantities in~\eqref{eq:threshold} are bounded as well; moreover, it is possible to define $\forall i,\, k$ at each time step a bound $\bar w_{i,k}$, so that $\left|w_{i,k}(\subss {z} i)\right|\leq\bar w_{i,k}(\subss {z} i)$; $\subss{\bar\varrho}{i,k}$ is defined in Assumption~\ref{ass:standardAssumCtrl}. The threshold dynamics \eqref{eq:threshold} can be initialized with $\subss{\bar{\epsilon}}{i,k}(0)=\subss{\bar{\varrho}}{i,k}(0)$.

               \begin{rem}
                 \label{rem:comm}
                 For FD purposes, the communication between subsystems is limited. It is not necessary, in general, that each diagnoser knows the complete model of parent subsystems. Instead, in the shared case \eqref{eq:shared_state_dyn}, it is sufficient that each subsystem $\subss{\tilde\Sigma} i$ sends to subsystems in $\Sset^k$ only a limited number of variables (the interconnection variables and the consensus terms for estimates and thresholds), locally computed.
               \end{rem} 

               The threshold in \eqref{eq:threshold} guarantees the absence of false-positive alarms before the occurrence of the fault caused by the uncertainties. On the other hand, this is a conservative result since it does not allow to detect faults whose magnitude is lower than the uncertainties magnitude in the system dynamics. This issue is formalized in the fault detectability section (Section \ref{sec:faulty}), where we consider also the issue that the fault may be hidden by the control action.
          
          \subsection{The consensus matrix}
               \label{sec:consensus}
               In this subsection we explain how to properly define the consensus matrix in order to allow for PnP operations. Consensus is applied to the shared variables, i.e. state variables representing the interconnection between two or more subsystems. For PnP capabilities, we use a time-varying weighting matrix $W^k$ whose dimension is equal to the maximum number of subsystems that can be plugged in sharing that variable. This is not a restrictive assumption since it is possible to choose a dimension as large as wanted. Each row can have non null elements only on correspondence of connected (plugged-in) subsystems. In the case that, at a given time, the variable is not shared (and hence at most one subsystem is using it) the only non-null weight is the one corresponding to the considered subsystem (this does not affect the convergence of the FD estimator as illustrated in Subsection~\ref{sec:estim:conv}).
               
               Indeed, the introduction of the proposed time-varying consensus matrix is advantageous from a second perspective. Since the proposed threshold is conservative, it is important to choose it as small as possible. Therefore, in the case of shared variables, similarly as in \cite{Boem2013}, we design a time-varying consensus-weighting matrix $W^k$ able to minimize the adaptive threshold with respect to the consensus weights, by choosing the smallest threshold term from all the threshold additive terms in \eqref{eq:threshold}. In this consensus protocol, it is convenient to weight more the subsystem which has got the lowest threshold component, hence the subsystem that has lower uncertainty in its measurements and in the local model. These aims can be achieved by defining the following consensus matrix, where each $(i,j)$-th component is computed as:
               \begin{equation}
                 \label{eq:Wdef}
                 W_{i,j}^k =
                 \begin{cases}
                   1 & \mbox{if } j = \arg\min_{j\in\Sset^k} \lambda ( \subss{\bar\epsilon}{j,k} + \subss{\bar\varrho}{j,k} )+ \left|\tilde A_{jj,k}\right|\subss{\bar\varrho}j  + \left|\tilde B_{j,k}\right| {(\Delta \bar g_{j}}\left|\subss u j\right| + \Delta \bar h_{j} )+ \bar w_{j,k}(\subss {z} j) \\
                   0 & \mbox{otherwise}
                 \end{cases}
               \end{equation}
               At each time-step each local fault-diagnoser receives estimates and consensus terms of variable $\subss\tx{i,k}$ only from the subsystems sharing it at that specific time. Then, it selects the contribution affected by ``smaller uncertainty". It is worth noting that the set $\Sset^k$ is time-varying and collects only the subsystems that share variable $k$ and that are connected to the LSS at that specific time step. As briefly discussed in Section \ref{sec:faulty}, fault-detectability may be improved by this approach.
               
               
          \subsection{Estimator convergence}
               \label{sec:estim:conv}
               
               Next, we address the convergence properties of the overall estimator before the possible occurrence of a fault, that is for $t<T_0$. Towards this end, we introduce a vector formulation of the state error equation for sake of compacting the notation, just for analysis purposes. Specifically, we introduce the extended estimation error vector $\epsilon_{k,E}$, which is a column vector collecting the estimation error vectors of the $N_k$ subsystems sharing the $k$-th state component: $\epsilon_{k,E} \triangleq \mbox{col} \left( \subss\epsilon{j,k}:~ j\in\Sset^k \right)$. Hence, the dynamics of $\epsilon_{k,E}$ can be described as:
               \begin{equation}
                 \label{eq:extended_dynamics}
                 \epsilon_{k,E}^{+}=W^k\left[\lambda\epsilon_{k,E}+\tilde A_{k,E}\varrho_{E}+\tilde B_{k,E}(\Delta \tg_{E}u_{E}+\Delta \tildeh_{E})+w_{k,E}-\lambda\varrho_{k,E}\right]+\lambda\varrho_{k,E}+\varrho_{k,E}^+,                  
               \end{equation}
               where $\varrho_{k,E}$ is a column vector, collecting the corresponding $k_j$ value of vector $\subss\varrho{j}$, i.e. $\subss\varrho{j,k_J}$, for each $j\in\Sset^k$; $\tilde A_{k,E}$ is a block matrix with $N_k$ rows and $n_E=\sum_{j=1}^{N_k}\tilde{n}_{J}$ columns, $j\in\Sset^k$, where the elements on the diagonal are the row vectors $\tilde A_{jj,k}$; $\tilde B_{k,E}$ is defined in an analogous way. Finally, $\varrho_E$, $\Delta \tg_{E}$, $\Delta \tildeh_{E}$ and $u_E$ are column vectors collecting the vectors $\subss\varrho j$, $\Delta \tg_{j}$, $\Delta \tildeh_{j}$ and $\subss u j$, with $j\in\Sset^k$, respectively, $w_{k,E}$ is defined in an analogous way. The following convergence result is now in place.

               \begin{prop}
                 System \eqref{eq:extended_dynamics}, where the consensus matrix is given by \eqref{eq:Wdef}, is a BIBO stable.
               \end{prop}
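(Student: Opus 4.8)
The plan is to read \eqref{eq:extended_dynamics} as a discrete-time \emph{linear time-varying} system whose free dynamics are driven by the matrix $\lambda W^k$, and to show that this matrix is a uniform contraction in the $\infty$-norm while every forcing term is uniformly bounded. First I would rewrite \eqref{eq:extended_dynamics} in the compact form
$$\epsilon_{k,E}^{+}=\lambda W^k\,\epsilon_{k,E}+d_k(t),$$
where $d_k(t)\triangleq W^k\big[\tilde A_{k,E}\varrho_{E}+\tilde B_{k,E}(\Delta \tg_{E}u_{E}+\Delta \tildeh_{E})+w_{k,E}-\lambda\varrho_{k,E}\big]+\lambda\varrho_{k,E}+\varrho_{k,E}^{+}$ collects all the terms that do not depend on $\epsilon_{k,E}$.

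Second, I would establish the two quantitative facts that make the argument work. (i) By construction \eqref{eq:Wdef}, every row of $W^k$ has exactly one unit entry and all the others zero, hence $W^k$ is row-stochastic and $\norme{W^k}{\infty}=1$ at every time step; consequently $\norme{\lambda W^k}{\infty}=\lambda<1$, uniformly in $t$ and regardless of which subsystems are currently connected. (ii) By Assumption~\ref{ass:standardAssumCtrl}, the constraint sets $\tXset_j$, $\Uset_j$, $\tilde\Psi_j$, $\Oset_j$ are compact, the coupling maps $\tw_j$ are bounded and $|\varrho_j|\le\bar\varrho_j$; since $\Delta\bar g_j$ and $\Delta\bar h_j$ are finite (being maxima of continuous functions over compact sets) and $\bar w_{j,k}$ is likewise finite, there is a constant $\bar d_k<\infty$ with $\norme{d_k(t)}{\infty}\le\bar d_k$ for all $t$ — here the bound $\norme{W^k}{\infty}=1$ is used once more to bound the block premultiplied by $W^k$.

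Third, combining (i)–(ii) and taking $\infty$-norms in the recursion gives $\norme{\epsilon_{k,E}(t+1)}{\infty}\le\lambda\,\norme{\epsilon_{k,E}(t)}{\infty}+\bar d_k$, whence by induction
$$\norme{\epsilon_{k,E}(t)}{\infty}\le\lambda^{t}\,\norme{\epsilon_{k,E}(0)}{\infty}+\frac{\bar d_k}{1-\lambda},\qquad t\ge0,$$
so the estimation-error state — and therefore the output of \eqref{eq:extended_dynamics} — remains bounded for every bounded input and initial condition; this is exactly BIBO stability.

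I do not expect a real obstacle here. The only point requiring care is that $W^k$ is time-varying (both in its entries and, through the plug-in/unplugging bookkeeping, in the active index set $\Sset^k$), which rules out a naive frozen-time eigenvalue argument; this is precisely why the proof proceeds through the time-uniform bound $\norme{\lambda W^k}{\infty}=\lambda<1$ rather than through the spectrum of $\lambda W^k$. A secondary, purely routine check is the finiteness of $\Delta\bar g_j$, $\Delta\bar h_j$ and $\bar w_{j,k}$, which follows from continuity on the compact constraint sets guaranteed by Assumption~\ref{ass:standardAssumCtrl}.
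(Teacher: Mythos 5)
Your proposal is correct and follows essentially the same route as the paper's proof: both isolate the free dynamics $\lambda W^k$, use row-stochasticity of $W^k$ to get a uniform contraction factor $\lambda<1$, bound the lumped forcing term via Assumption~\ref{ass:standardAssumCtrl}, and sum the resulting geometric series. Your explicit use of the $\infty$-norm is in fact slightly cleaner than the paper's unspecified norm, since a row-stochastic matrix with repeated unit columns need not have spectral norm $1$, whereas its $\infty$-norm always equals $1$.
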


               \begin{proof}
                 The proof is carried out exploiting the one reported in \cite{Boem2013} in a purely distributed fault-diagnosis framework. Specifically, since $W^k$ is a stochastic matrix, its norm is always equal to $1$. Therefore, since $0<\lambda<1$, then also $\norme{\lambda W^k(t)}{}\leq\gamma<1$, with $0<\gamma<1$. Let us define:
                 \[U_{k,E}(t)=W^k(t)\left[\tilde A_{k,E}\varrho_{E}(t)+\tilde B_{k,E}(\Delta \tg_{E}u_{E}(t)+\Delta \tildeh_{E}(t))+w_{k,E}(t)-\lambda\varrho_{k,E}(t)\right]+\lambda\varrho_{k,E}(t)+\varrho_{k,E}(t+1).
                 \]
                 We have:
                 \begin{equation*}
                   \begin{aligned}
                     \norme{\epsilon_{k,E}(t+1)}{}&\leq\norme{\lambda W^k(t)\epsilon_{k,E}(t)}{}+\norme{U_{k,E}(t)}{}\\
                                                                &\leq\norme{\lambda W^k(t)}{}\norme{\lambda W^k(t-1)}{}\ldots\norme{\lambda W^k(0)}{}\norme{\epsilon_{k,E}(0)}{}\\
                                                                &+\sum_{j=1}^{t}\norme{\lambda W^k(t)}{}\norme{\lambda W^k(t-1)}{}\ldots\norme{\lambda W^k(j)}{}\norme{U_{k,E}(j)}{}\\
                                                                &\leq\gamma^t\norme{\epsilon_{k,E}(0)}{}+\sum_{j=1}^{t}\gamma^{t-j}\norme{U_{k,E}(j)}{}\\
                                                                &\leq \frac{1}{1-\gamma}\sup_{j\geq 1}\norme{U_{k,E}(j)}{}
                   \end{aligned}
                 \end{equation*}
                 For $t\rightarrow \infty$, the unforced system converges to zero and the series converges to a bounded value (see results in \cite{Michaletzky2002bibo}). Moreover, using results in \cite{Sichitiu2003} for unforced systems, we can state that a system $x(t + 1) = A(t)x(t)$, with $A(t)\in\convh(A_1,\dots, A_N)$, it is exponentially stable iff $\exists$ a sufficiently large integer $q$ such that $\norme{A_{i_1}~A_{i_2}\dots A_{i_q}}{} \leq \gamma < 1,~\forall (i_1,\dots,i_q)\in\left\{1,\dots,N\right\}^q$. In our case, therefore, we only need to analyze matrix $W^k(t)$. Since each row of $W^k(t)$ has all null elements except one equal to $1$, the product $W^k(t)W^k(t-1)\dots W^k(0)$ is a stochastic matrix. Hence, since $0<\lambda < 1$, we have $\norme{\lambda^t(W^k(t)W^k(t-1)\ldots W^k(0))}{}< 1$ and the hypothesis is satisfied. Finally, since all the uncertain terms are bounded, then the discrete-time system \eqref{eq:extended_dynamics} is BIBO stable.
               \end{proof}

     \section{Fault Detectability Analysis}
          \label{sec:faulty}
 In this section, we analyze fault detectability properties of the proposed FD architecture. In particular, we highlight the effects of the control input on fault detectability conditions.
         Let us now consider the  case of a faulty subsystem, that is, suppose that a fault $\phi(\cdot)$ occurs at an unknown time $t=T_{0}$ on the $k$-th state variable. In the general case of a shared variable, $\phi_{k,E}=\subss{\phi}{\ ,k} \cdot (1, \dots ,1)^T$ denoting the extended fault function vector collecting the fault functions of the subsystems sharing the $k$-th variable. After the occurrence of the fault, for $t>T_{0}$, the state estimation error dynamics is given by:
          \begin{equation*}
            \epsilon_{k,E}^{+}=W^k\left[\lambda\epsilon_{k,E}+\tilde A_{k,E}\varrho_{E}+\tilde B_{k,E}(\Delta \tg_{E}u_{E}+\Delta \tildeh_{E})+w_{k,E}-\lambda\varrho_{k,E}\right]+\lambda\varrho_{k,E}+\varrho_{k,E}^++\phi_{k,E}.
          \end{equation*}
          Then, at a time instant $t_{1}>T_{0}$, the estimation error is
          \begin{equation*}
            \begin{aligned}
              \epsilon_{k,E}(t_1) &=\sum_{h=0}^{t_{1}-1}(\lambda W^{k}(h))^{t_{1}-1-h}[-W^k(h)\tilde A_{k,E}\varrho_{E}(h)+\tilde w_{k,E}(h)+W^k(h)\tilde B_{k,E}(\Delta \tg_{E}u_{E}(h)+\Delta \tildeh_{E})\\
              &-\lambda W^k(h)\varrho_{k,E}(h)+\lambda\varrho_{k,E}(h)+\varrho_{k,E}(h+1) +\phi_{k,E}(h)]+\prod_{h=0}^{t_1-1}(\lambda W^k(h))\epsilon_{k,E}(0)
            \end{aligned}
          \end{equation*}
          Now, we derive a sufficient condition in order to characterize a class of faults that can be detected by the proposed FD scheme. In order to detect the occurrence of the fault at a certain time $t_1$, the following inequality has to be satisfied:
          \begin{equation*}
            \left|\epsilon_{k,E}(t_1)\right|>\bar{\epsilon}_{k,E}(t_1),
          \end{equation*}
          for at least one subsystem $i \in\Sset^k$. When dealing with vectors, in this paper, the inequality operator is applied component-by-component.
          Using the triangle inequality and the threshold definition \eqref{eq:threshold}, the following is implied
          \begin{equation*} 
            \left|\epsilon_{k,E}(t_{1})\right|\geq-\bar{\epsilon}_{k,E}(t_{1})+\left|\sum_{h=T_{0}}^{t_{1}-1}[\lambda^{t_{1}-1-h}\phi_{k,E}(h)]\right|.
          \end{equation*}
          Since $\phi_{k,E}$ is a vector whose components are all equal to $\phi_k=\phi_{i,k_i}=\phi_{j,k_j}$, it is easy to see that the FD condition $\left|\subss\epsilon{i,k}(t_1)\right|>\subss{\bar\epsilon}{i,k}(t_1)$ is satisfied if
          \begin{equation}
            \label{eq:condition}
            \exists t_{1}>T_{0} \, \colon \,
            \left|\sum_{h=T_{0}}^{t_{1}-1}\lambda^{t_{1}-1-h}\phi_k(h)\right|>2\subss{\bar{\epsilon}}{i,k}(t_{1})
          \end{equation}
          for at least one component $k\in \{1\ldots,\tilde{n}_i \}$, thus allowing the detection of a fault at time $t_{1}$. Condition \eqref{eq:condition} implicitly characterizes the class of faults that are detectable by the proposed FD architecture at time $t_{1}$. Moreover, thanks to the introduction of the time-varying consensus weighting matrix, the threshold on the right-hand-side of \eqref{eq:condition} is the smallest one in the set of the proposed conservative thresholds of subsystems sharing the same variable, guaranteeing no false alarms. The choice of a smaller threshold makes it easier the detectability at the general time instant $t_1$, thus we can say intuitively from \eqref{eq:condition} that the class of detectable faults at time $t_1$ is enlarged thanks to this choice.          
          
            It is worth emphasizing the influence of the control inputs on the fault detectability condition by rewriting \eqref{eq:condition} as
            \begin{multline}
              \label{eq:condition2}
              \left|\sum_{h=T_{0}}^{t_{1}-1}\lambda^{t_{1}-1-h}\phi_{k,E}(\tx_E,{\tilde \psi}_E,u_E,h)\right|>\\2 \Big (\sum_{h=0}^{t_{1}-1}(\lambda W^{k}(h))^{t_{1}-1-h}[W^k(h)\Big(\left|\tilde A_{k,E}\right|\bar{\varrho}_{E}(h) +\bar{w}_{k,E}(h)+\left|\tilde B_{k,E}\right|(\Delta \bar g_{E}\left| u_{E}(h)\right|+\Delta \bar h_{E})\\
              +\lambda \bar\varrho_{k,E}(h)\Big)+\lambda\bar\varrho_{k,E}(h)+\bar\varrho_{k,E}(h+1)]+\prod_{h=0}^{t_1-1}(\lambda W^k(h))\epsilon_{k,E}(0) \Big ) \, .
               \end{multline}
            
            Actually, the norm of the control term $u_E(t_{1}-1)$ affects the threshold on the right side of the inequality and, in particular, it may have a detrimental effect on the fault detectability by increasing the detection threshold. On the other hand, the control influences also the left part of the condition inequality, by acting on the fault function, which depends directly on $u_E(t_{1}-1)$ and, by means of $\tx_E$, it depends also on the past history of the control input.
            In order to analyze this point, it is possible to rewrite \eqref{eq:condition2} as
            
            \begin{multline}
              \label{eq:condition3}
              \left|\sum_{h=T_{0}}^{t_{1}-1}(\lambda W^{k}(h))^{t_{1}-1-h}W^k(h)\phi_{k,E}(\tx_{E},\tilde \psi_{E}, u_{E},h)\right|>\\2 \Big (\sum_{h=0}^{t_{1}-1}(\lambda W^{k}(h))^{t_{1}-1-h}[W^k(h)\left|\tilde B_{k,E}\right| (\Delta \bar g_{E}\left| u_{E}(h)\right|+\Delta \bar h_{E})]+\varsigma_E(h) \Big )
            \end{multline}
            
          where 
          \begin{multline*}
          \varsigma_E=2 \Big (\sum_{h=0}^{t_{1}-1}(\lambda W^{k}(h))^{t_{1}-1-h}[W^k(h)\Big(\left|\tilde A_{k,E}\right|\bar{\varrho}_{E}(h) +\bar{w}_{k,E}(h)+\Delta \bar h_{E})+\lambda \bar\varrho_{k,E}(h)\Big)\\+\lambda\bar\varrho_{k,E}(h)+\bar\varrho_{k,E}(h+1)]+\prod_{h=0}^{t_1-1}(\lambda W^k(h))\epsilon_{k,E}(0) \Big )
          \end{multline*}
          is the threshold part that does not depend directly on the extended control input. Therefore, it is constant w.r.t. the control input\footnote{This could be not always true since the control input could influence also the bounds of the measurement error and coupling by means of the state dynamics. However in some cases this dependence could be neglected especially when considering conservative bounds.}.
          As a consequence, the contribution of the control input to detectability properties at a certain time $t_1$ could be highlighted by deriving the vectors of functions $\left|\phi_{k,E}(x_{E},\tilde \psi_{k,E}, u_{E},h)\right|$ and $\left|\tilde B_{k,E}\right|(\Delta \bar g_{E}\left| u_{E}(h)\right|+\Delta \bar h_{E})$ w.r.t. the vector $u_E$ norm component-by-component. If it is possible to obtain the derivatives vector of the fault function we want to detect (as example, if it is possible to assume that it is a Lipschitz function w.r.t. the control input norm and to know the Lipschitz constant), then, it is possible to compare the two derivatives for each subsystem $i \in\Sset^k$. In fact, the right side term is linear w.r.t. to the norm of the control input. Intuitively, if the control input norm makes the magnitude of the fault function grow less than the threshold bounds, then the control input has a detrimental effect on detectability at time step $t_1$, since it increases the uncertainty threshold terms that hide the fault effects. On the other hand, if the control input norm makes the magnitude of the fault function grow much more than the threshold bounds, then it could be possible to take advantage of the control input effect trying to improve detectability.

     \section{Reconfiguration strategy}
          \label{sec:reconfiguration}
          In the previous sections we derived suitable control and fault detection architectures for a PnP framework. We now explain how to use them during plugging-in and unplugging operations. In this section, the reconfiguration of the LSS, in case of detection of a fault in one of the subsystems, is addressed. In general, depending on the specific application context, two distinct actions may turn out to be feasible: i) immediate ``disconnection" of the faulty subsystem or ii) continuation of the system operation in ``safety mode". As in this paper we deal with an {\em active distributed fault-tolerant} control scheme, we consider only the first scenario and, in the following, the {\em unplugging} after fault-detection and the possible {\em plug-in} after subsystem repair/replacement are addressed separately. We assume that, when the plant is started, all subsystems are healthy and governed by local controllers designed through Algorithm \ref{alg:pnpcontrollers}.
          \begin{figure}[!htb]
            \begin{center}$
              \begin{array}{cc}
                \includegraphics[scale=0.25]{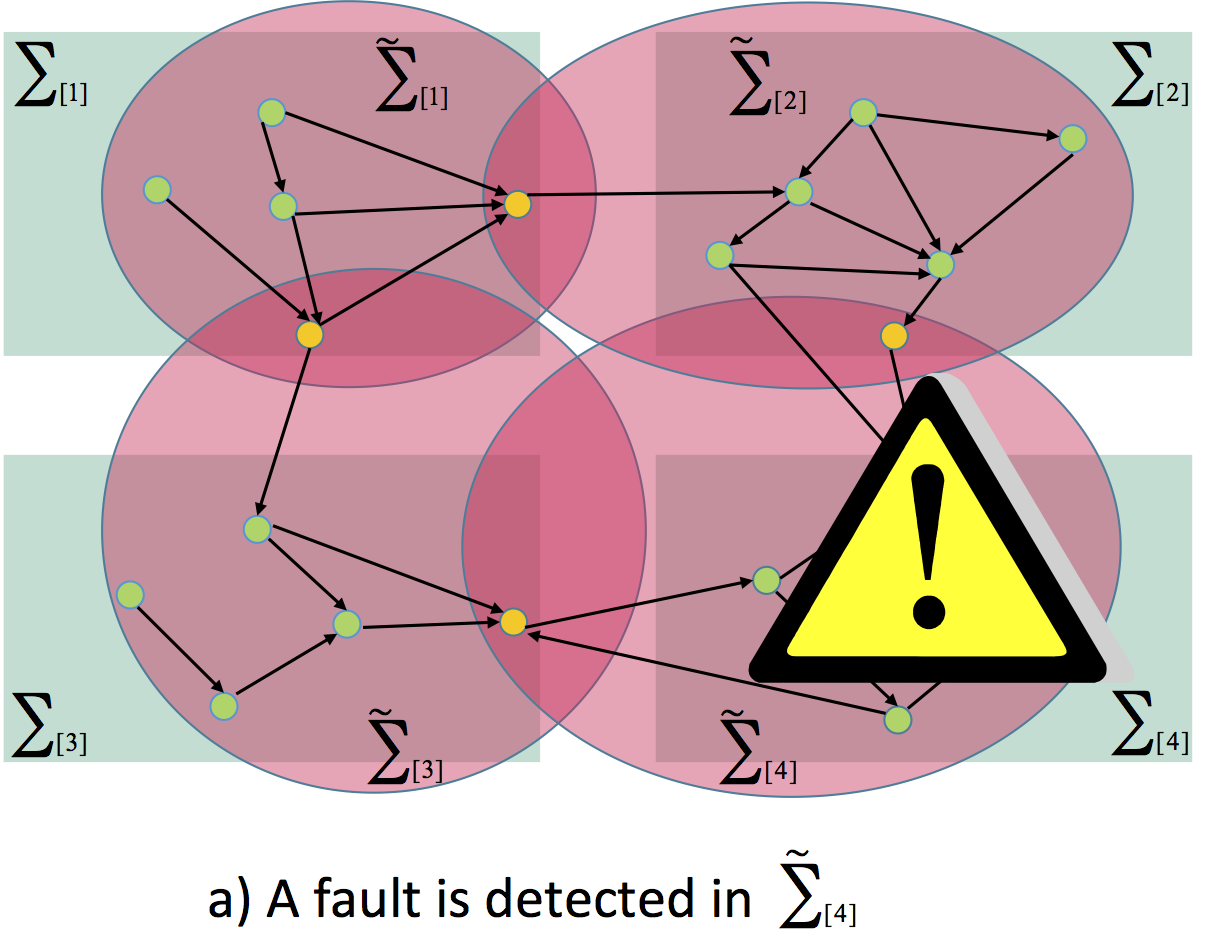}
                \includegraphics[scale=0.25]{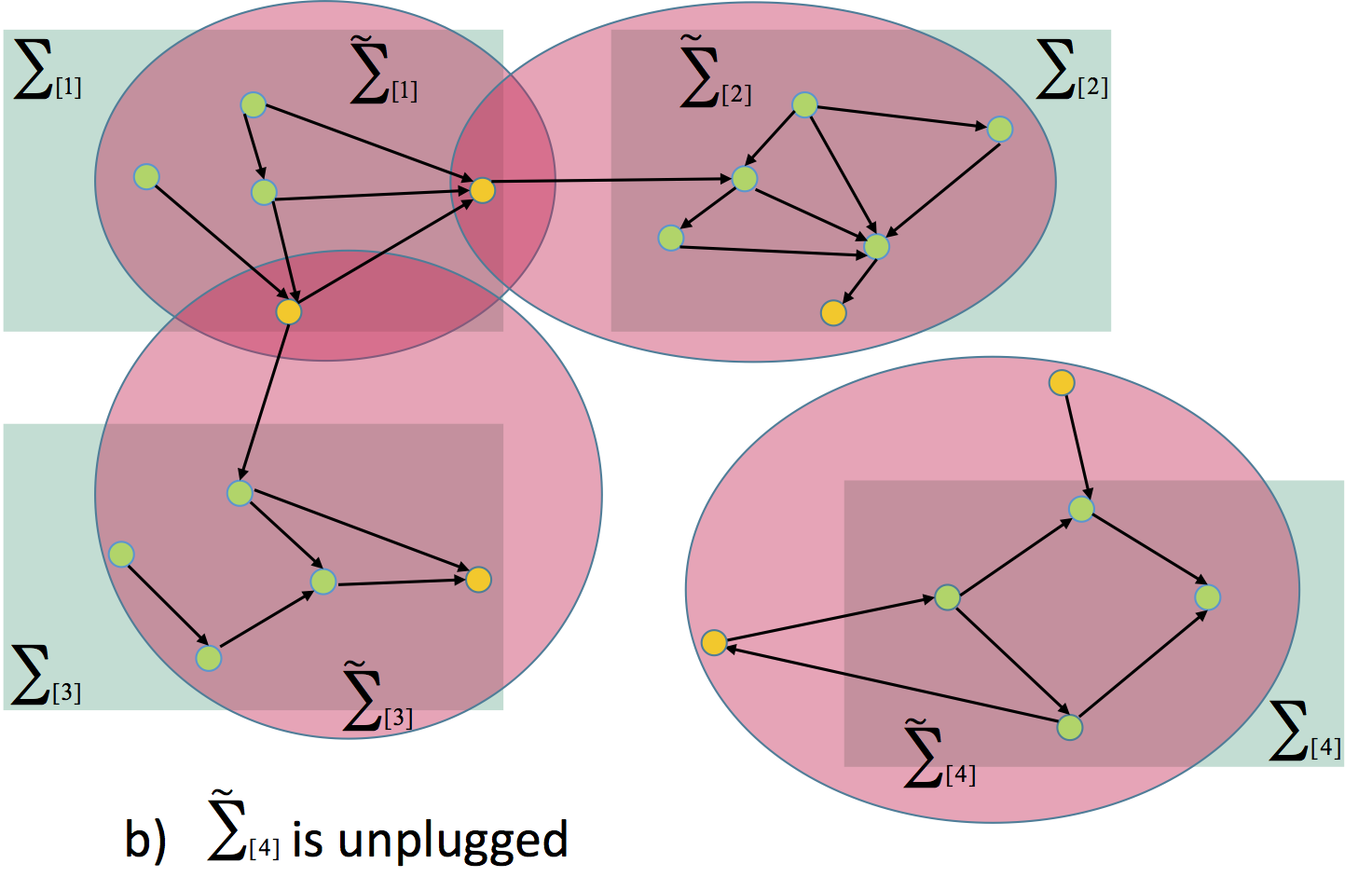}\\
                \includegraphics[scale=0.25]{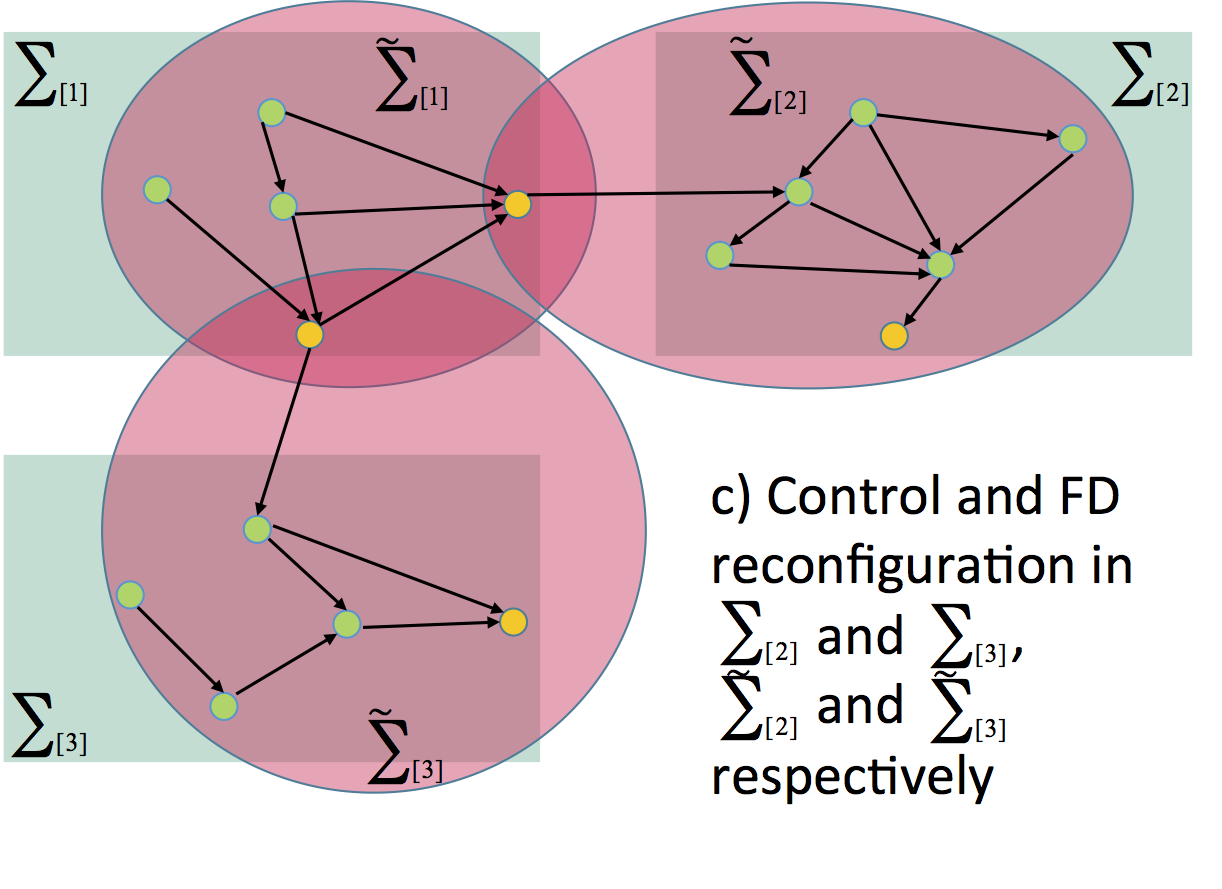}
                \includegraphics[scale=0.25]{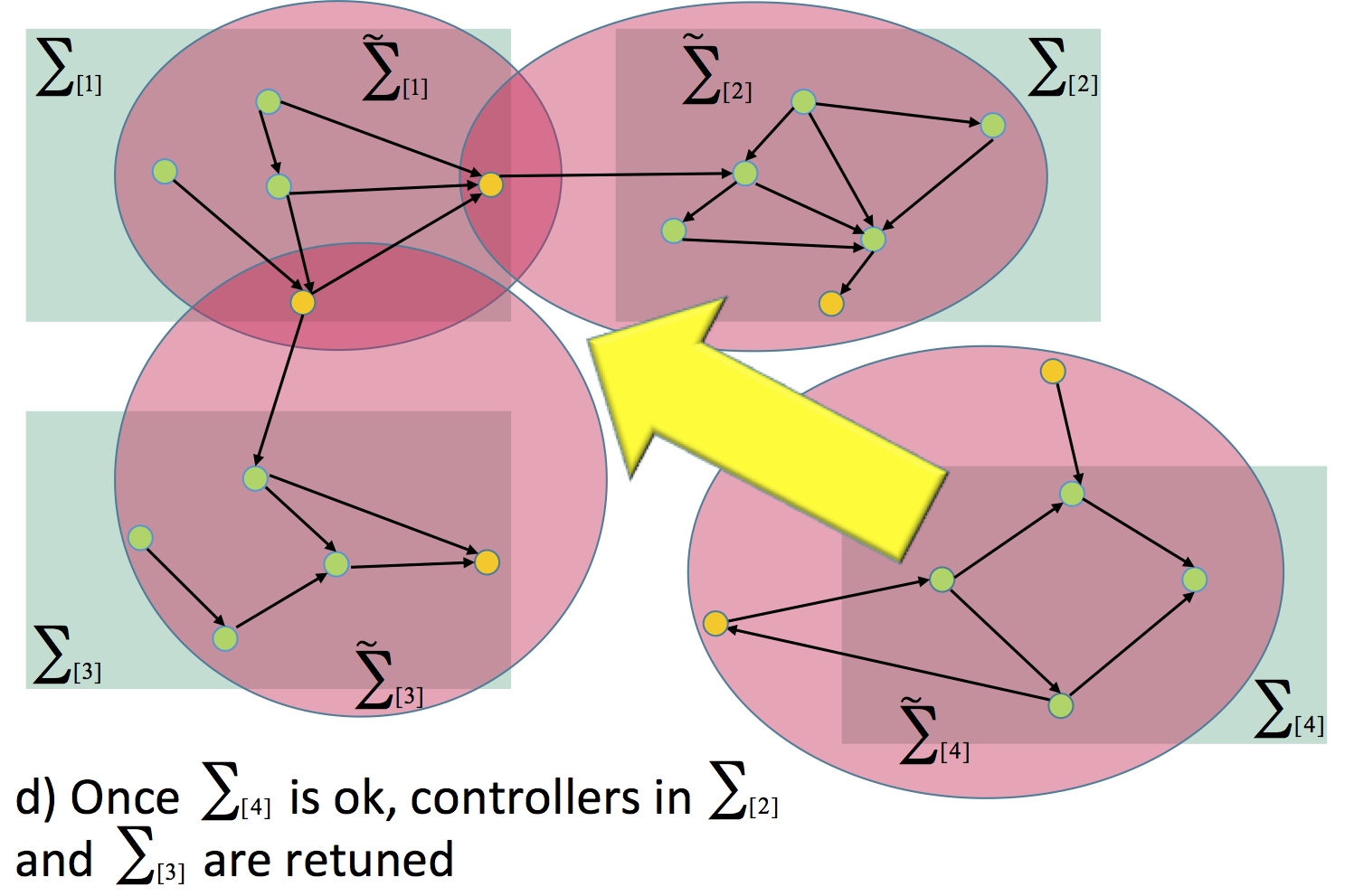}
              \end{array}$
            \end{center}
            \caption{The reconfiguration process: a), b), c), d) steps.}
            \label{reconfiguration}
          \end{figure}
          
          \subsection{Subsystem unplugging after fault detection}
               In this section, we show how to reconfigure local controllers and fault-detectors when a fault is detected in a subsystem. The proposed strategy is based on the isolation of the faulty subsystem and on the reconfiguration of controllers and fault-detectors to guarantee closed-loop stability, constraint satisfaction and monitoring of the new network with one less subsystem.

               In the following, we describe in depth the needed operations after a fault detection. Let $t=t_1$ the detection time of a fault in the $j$-th subsystem ($\subss{\tilde \Sigma} j$ in the FD architecture and $\subss{\Sigma} j$ in the control architecture), then the faulty subsystem is unplugged and the involved subsystems reconfigured.\\
               As regards the distributed FD, we need to perform the following operations.
               \begin{itemize}
               \item In the children subsystems $i \in\FF_j$, for $t\ge t_1$, the components of $\subss {\tilde\psi} i$ and $\subss z i$ related to subsystem $\subss{\tilde \Sigma} j$ become equal to $0$. Hence, for $t\ge t_1$, the interconnection variables and measurements related to subsystem $\subss{\tilde \Sigma} j$ do not influence the time-behaviour of the state estimation \eqref{eq:shared_state_dyn} and of the threshold \eqref{eq:threshold} of subsystems $\subss{\tilde \Sigma} i$.                  
               \item In the children subsystems $i \in\FF_j$, the adaptive threshold $\subss{\bar{\epsilon}}{i}$ is computed through \eqref{eq:threshold} by not considering the coupling terms related to the $j$-th subsystem when computing $\bar{w}_i$ for $t\ge t_1$.                 
               \item In the neighbouring subsystems $i$, with $i \in\FF_j$ or $i \in\NN_j$, sharing some variables with $\subss{\tilde \Sigma} j$, the weights associated with $\subss{\tilde \Sigma} j$ in the consensus matrices $W^k$ computed in \eqref{eq:Wdef} are set to zero, that is, $j\notin\Sset^k$ for $t\ge t_1$ for all the shared variables $k$.
               \end{itemize}
               
               Beyond the above changes in the local estimators embedded in the distributed FD framework as a consequence of the subsystem unplugging after the detection of a fault, the reconfiguration of the control architecture has to be addressed as well.
               Under Assumption \ref{ass:standardAssumCtrl}-(\ref{ass:couplinglimits}), for each $i\in\FF_j$, a contraction of the set $\NN_i$ takes place, since subsystem $\subss\Sigma i$ has one parent less. Then, a contraction takes place also on set $\Wset_i$ in \eqref{eq:ch9:disturbanceControlSet} and the set $\bZset_i^0$ already computed still verifies the inclusions in Step (\ref{enu:AssOmegai}) of Algorithm \ref{alg:pnpcontrollers}. Therefore, for each $i\in\FF_j$, the previous choice of $\bZset_i^0$ (made before the unplugging) still guarantees the feasibility of the LP problem in Step (\ref{enu:ch9:rciAlg}) of Algorithm \ref{alg:pnpcontrollers} which finally implies that there is no need of redesigning the controller $\subss\CC i$ to keep the overall stability. 

               In conclusion, thanks to the distributed MPC controllers and distributed fault detectors schemes we designed, the detection of a fault in a subsystem implies the isolation of the faulty subsystem and the reconfiguration of local controllers and fault detectors, at most, of parent and children subsystems. This guarantees the fault is not propagated in the network.

          \subsection{Subsystem plugging-in}
               \label{sec:plugin}
               
               The plug-in of a subsystem into the LSS interconnected structure may be needed in case of replacement of a previously unplugged subsystem the fault diagnoser in use before subsystem disconnection can be reused. Since we assumed controllers $\subss\CC i$ existed for the subsystem and its children when it was connected to the plant, this operation is always feasible as regards the control framework. For what concerns the distributed FD architecture, thanks to the way the time-varying shared variables estimator is defined, the plug-in is always feasible as well.

               \begin{rem}
                 Note that, differently from \cite{Riverso2013c,Riverso2014a}, here we do not consider the plugging-in of new subsystems but just the reconnections of subsystems after they have been repaired. Therefore, existence of controllers $\subss\CC i$ when all subsystems are healthy guarantees that after a plugging-in or unplugging operation in real-time
                 \begin{itemize}
                 \item constraints on the input and states of all subsystems are still fulfilled;
                 \item the new mode of operation of the whole plant is asymptotically stable (Theorem \ref{thm:ch9:mainclosedloop}).
                 \end{itemize}
               \end{rem}

               However, as well known in the hybrid system literature \cite{Hespanha1999}, frequent and persistent switching between different modes of operation could compromise asymptotic stability of the whole plant. A remedy could be assuming a minimal dwell-time between consecutive switches \cite{Hespanha1999} although this issue deserves further investigations.

     \section{Examples}
         \label{sec:simulationExample}

         \subsection{Coupled van der Pol oscillators} 
               \label{sec:vanDerPol}
               In this example, we apply the proposed methodologies to a ring of coupled vdPOs as in Figure \ref{fig:vdPOsRing}. They can be used to model many oscillating systems in a wide area of applications, including biological rhythms, heartbeat, chemical oscillations, circadian rhythms \cite{Barron2008}. 
               \begin{figure}[!htb]
                 \centering
                 \includegraphics[scale=0.25]{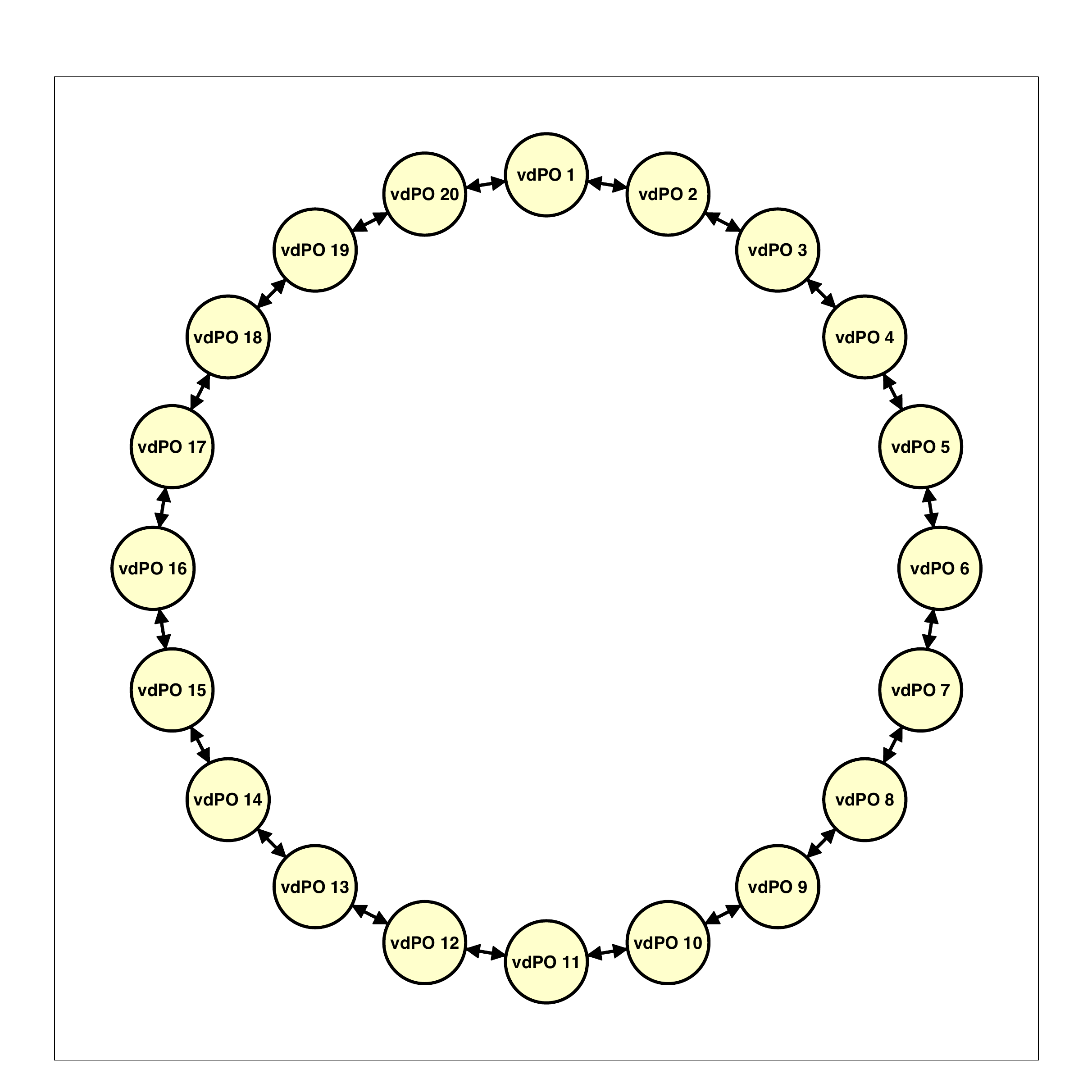}
                 \caption{Ring composed of coupled van der Pol oscillators.}
                 \label{fig:vdPOsRing}
               \end{figure}

               The dynamical model of the $i$-th coupled vdPO ($\subss{\Sigma}{i}^C$) is given by
               \begin{equation}
                 \label{eq:vdPO}
                 \begin{aligned}
                   \subss{\dot{x}}{i,1} &= \subss x {i,2}\\
                   \subss{\dot{x}}{i,2} &= -(1+2\bar\beta)\subss x {i,1} + \bar\beta \subss x {i-1,1} + \bar\beta \subss x {i+1,1} -\bar\alpha(\subss x {i,1}^2-1)\subss x {i,2} + g_i^A(\subss x {i,1})\subss u {i},
                 \end{aligned}
               \end{equation}
               where $g_i^A(\subss x {i,1})=\frac{1}{0.4+0.1\subss x {i,1}^2}$ is the function describing the nonlinear dynamics of an actuator. Each oscillator $i\in\MM$, is a subsystem with state $\subss x i=(\subss x {i,1},\subss x {i,2})$ and input $\subss u i$, where $\subss x {i,1}$ is the displacements of oscillator $i$ with respect to a given equilibrium position on the ring, $\subss x {i,2}$ is the velocity of the oscillator $i$ and $\subss u {i}$ is the force applied to oscillator $i$. For all vdPOs, we consider $\bar\alpha=0.1$ and $\bar\beta=-0.3$. Subsystems are equipped with the state constraints $\norme{\subss x {i,1}}{\infty}\leq 3$, $\norme{\subss x {i,2}}{\infty}\leq 2$, $i\in\MM$ and with the input constraints $\norme{\subss u {i}}{\infty}\leq 8$. We obtain models $\subss\Sigma i$ by discretizing continuous-time models with $T_s=0.1~$sec sampling time, using Euler discretization. In this example, the local fault detectors do not share variables, hence $\subss{\Sigma}{i}=\subss{\tilde\Sigma}{i}$. Moreover the design parameter of fault detectors has been set $\lambda=0.1$. As regards the control architecture, for each controller, we set
               \begin{equation*}
                 \subss u i =  (0.4+0.1\subss x {i,1}^2)\left[\bar\alpha(\subss x {i,1}^2-1)\subss x {i,2} + \subss v i + \bkappa_i(\subss x i-\subss\bx i)\right].
               \end{equation*}
               Then, we synthesize controllers $\subss\CC i$, $i\in\MM$ using Algorithm \ref{alg:pnpcontrollers}.\\
               In the following simulation, we consider a ring composed of $M=20$ vdPOs (see Figure \ref{fig:vdPOsRing}). We also consider the measurement errors bounded in the sets
               \begin{equation*}
                 \Oset_i=\{ \subss \varrho i\in\Rset^{2}:~\norme{\subss \varrho {i}}{\infty}\leq 10^{-1}\}.
               \end{equation*}
               The modelling of the LSS, the design of PnPMPC controllers and the simulations have been performed using the PnPMPC toolbox for MatLab \cite{Riverso2012g}. During the simulation, the control action $\subss u i(t)$ computed by the controller $\CC_{[i]}$, for all $i\in\MM$, is kept constant during the sampling interval and applied to the continuous-time system. In Figure \ref{fig:position} and \ref{fig:velocity} we show a simulation where at $t=0$, each vdPO is placed in a random position around the origin. For $0\leq t< 2.5s$,  due to the presence of measurements errors, the state is kept around the origin. In particular each controller $\subss\CC i$ computes the control inputs shown in Figure \ref{fig:input}. At time $\bar t=2.5s$, a fault occurs in the $11$-th vdPO: the actuator is breakdown and saturates the control input, hence $\subss u {11}(\bar t)=8$, $\forall t\geq \bar t$, and we can also see that the velocity of the $11$-th vdPO diverges from the origin. The next time instant, due to a large error between the state estimates and the measured states, the $11$-th FD detects the fault, indeed $\abs{\subss y {11,2}(\bar t+T_s)-\subss{\hat\tx}{11,2}(\bar t+T_s)}\geq\subss{\bar\epsilon}{11,2}(\bar t+T_s)$ (see Figure \ref{fig:thre}). At this time instant, the reconfiguration process starts: the faulty subsystem is unplugged and then the neighbouring oscillators ($\subss\Sigma{10}$ and $\subss\Sigma{12}$) retune their controllers and their fault detectors. In Figure \ref{fig:position} and \ref{fig:velocity}, we can note that for $t\geq \bar t+10T_s$, the states are still kept around the origin. At time $t=\bar t+10T_s$, the $11$-th actuator is fixed, then the vdPO can be plugged in: therefore neighbouring oscillators retune their controllers and fault detectors. The oscillator is initialized with $\subss x {11}(\bar t + 10T_s)=(2.5,0)$ and then the controller steers the state around the origin.

               \begin{figure}[!htb]
                 \centering
                 \begin{subfigure}[!htb]{0.28\textwidth}                   
                   \includegraphics[scale=0.25]{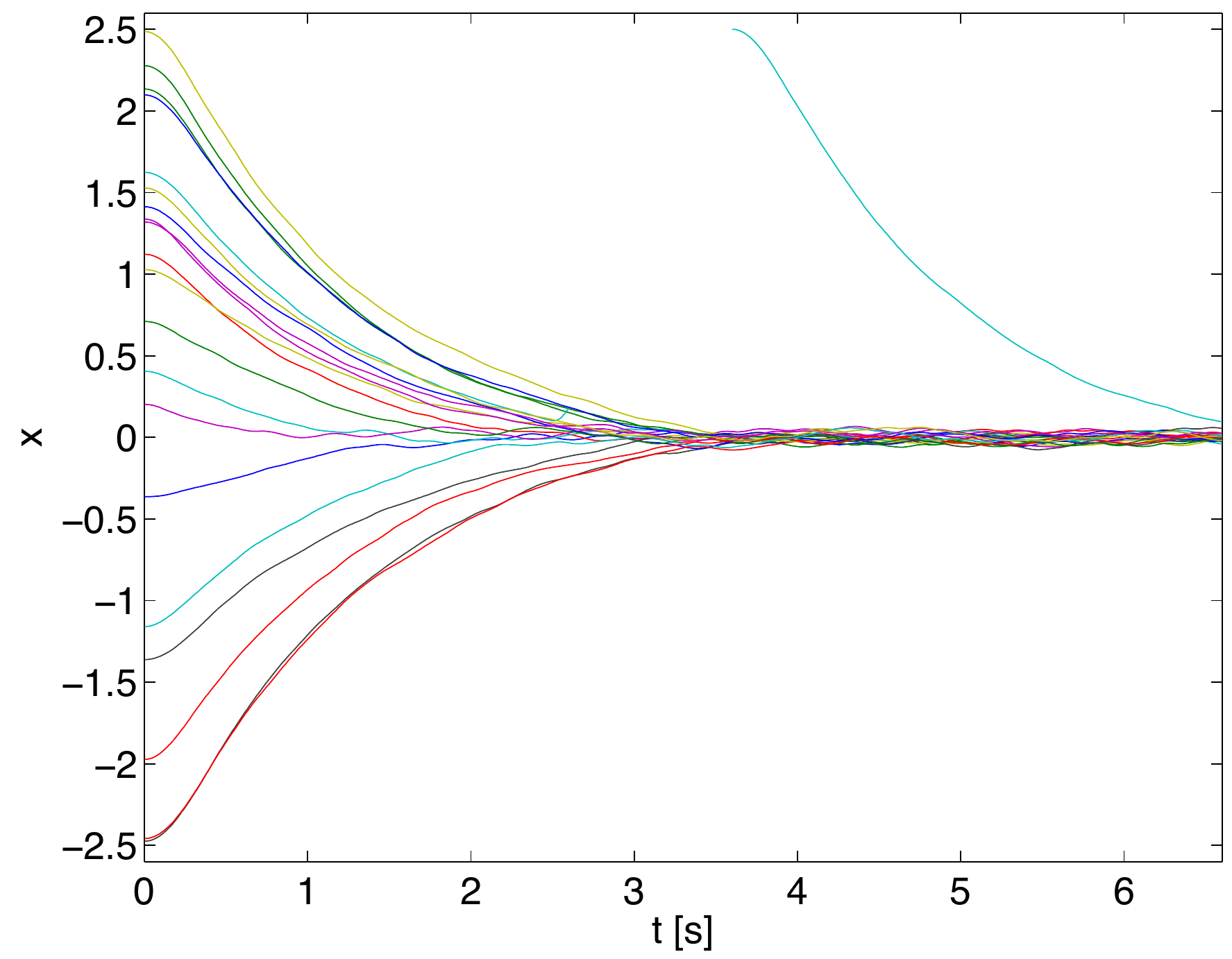}
                   \caption{Displacements of the vdPOs, i.e. states $\subss{x}{i,1}$, $i\in\MM$.}
                   \label{fig:position}
                 \end{subfigure}\quad
                 \begin{subfigure}[!htb]{0.28\textwidth}                   
                   \includegraphics[scale=0.25]{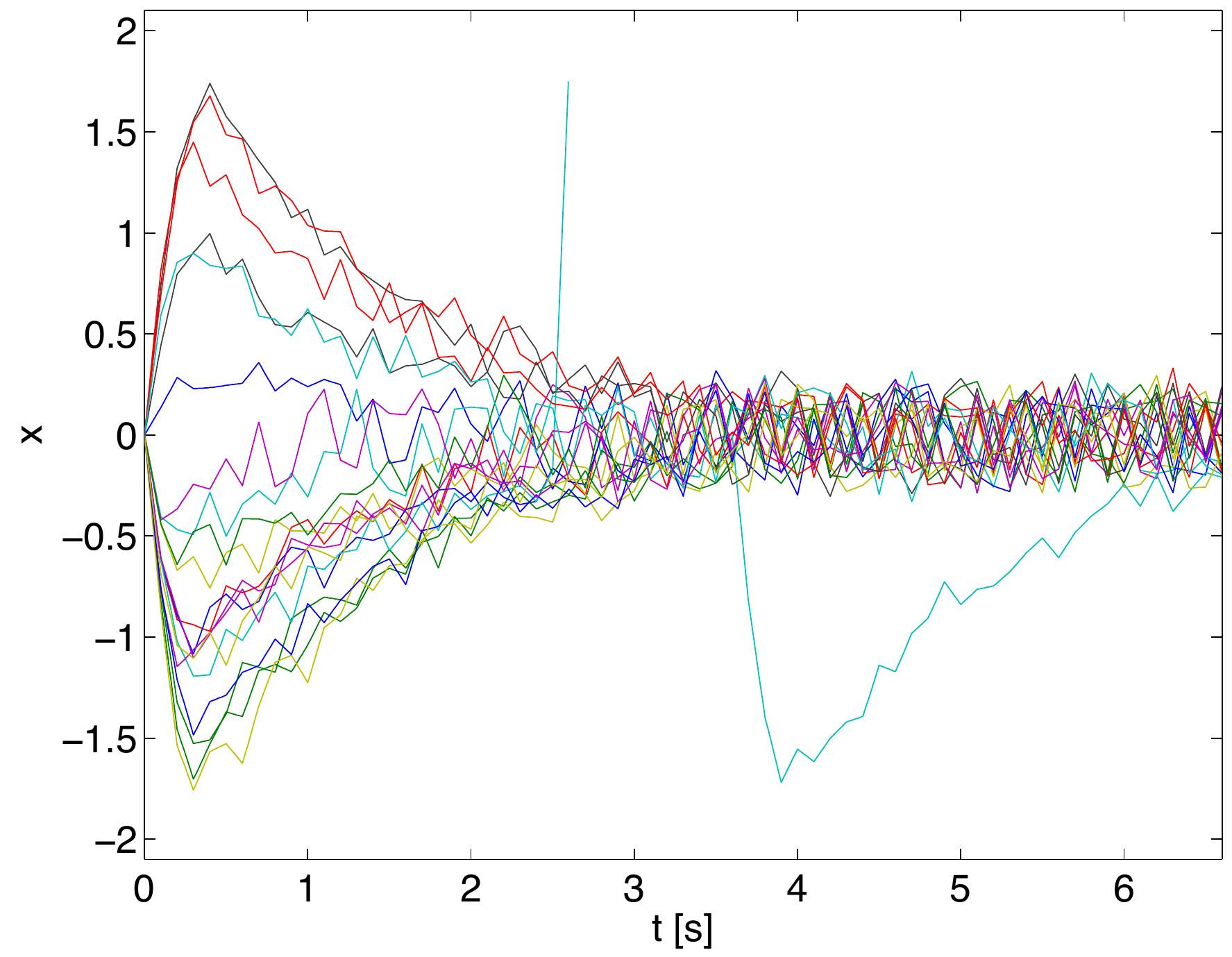}
                   \caption{Velocities, i.e. states $\subss{x}{i,2}$, $i\in\MM$.}
                   \label{fig:velocity}
                 \end{subfigure}\quad
                 \begin{subfigure}[!htb]{0.28\textwidth}                   
                   \includegraphics[scale=0.25]{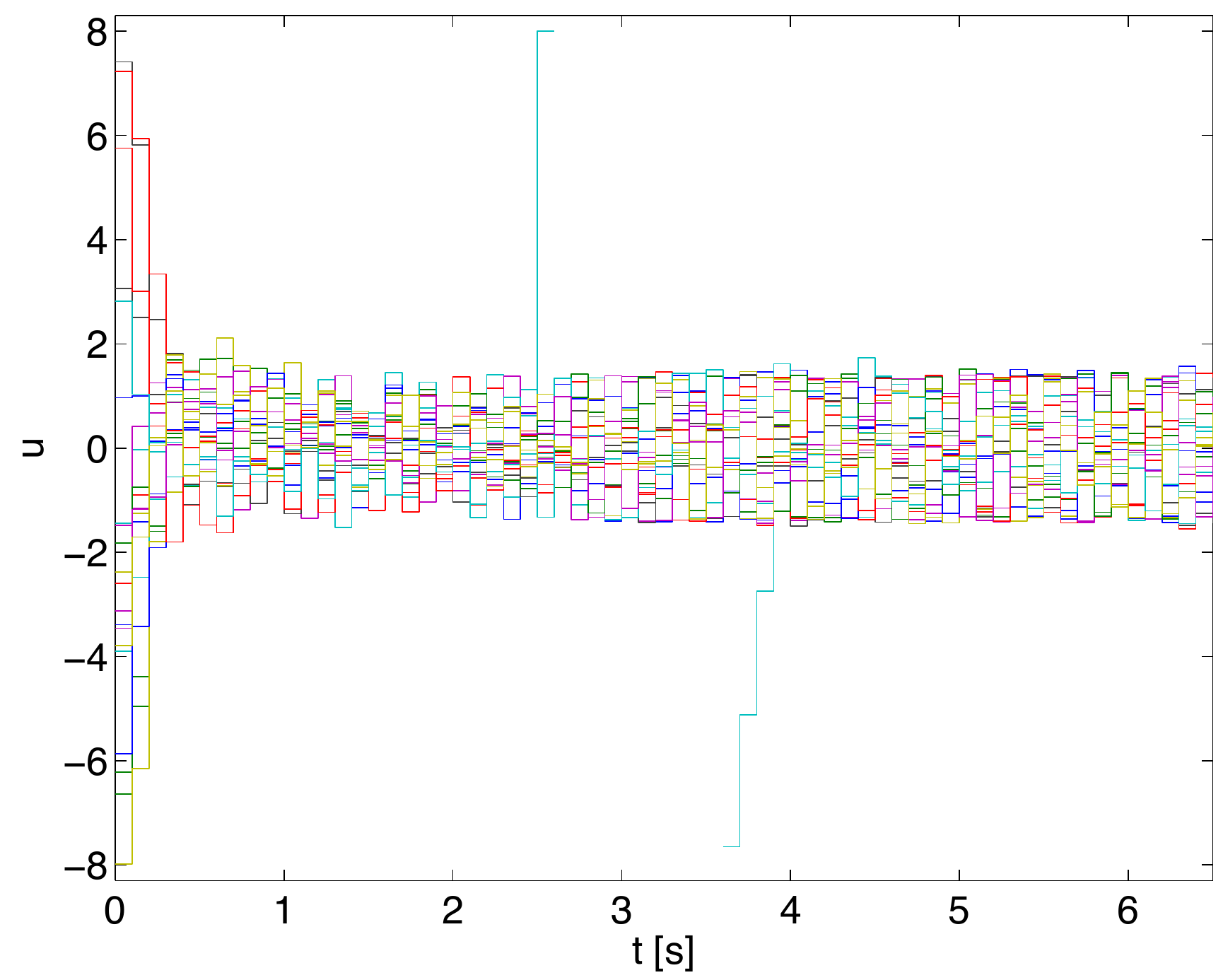}
                   \caption{Inputs $\subss{u}{i}$, $i\in\MM$.\newline}
                   \label{fig:input}
                 \end{subfigure}
               \end{figure}

               \begin{figure}[!htb]
                 \centering
                 \includegraphics[scale=0.42]{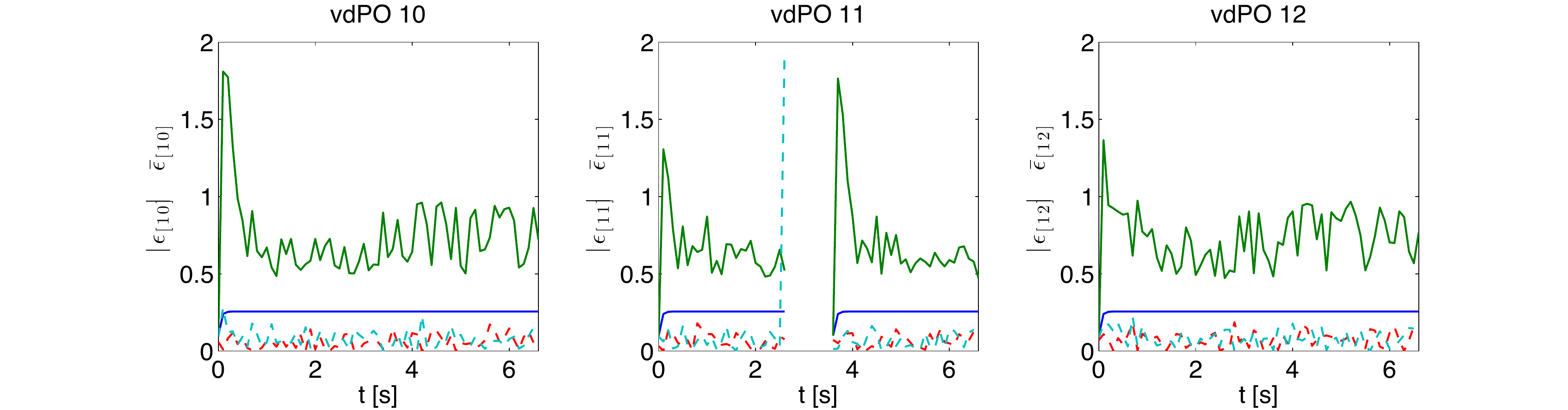}
                 \caption{Dashed lines are the absolute values of errors $\subss\epsilon i=\abs{\subss y i-\subss{\hat{\tilde{x}}}i}$ and bold lines are the thresholds $\subss{\bar\epsilon}i$, for $i=\{10,11,12\}$.}
                 \label{fig:thre}
               \end{figure}
               \begin{figure}[!htb]
                 \centering
                 \includegraphics[scale=0.35]{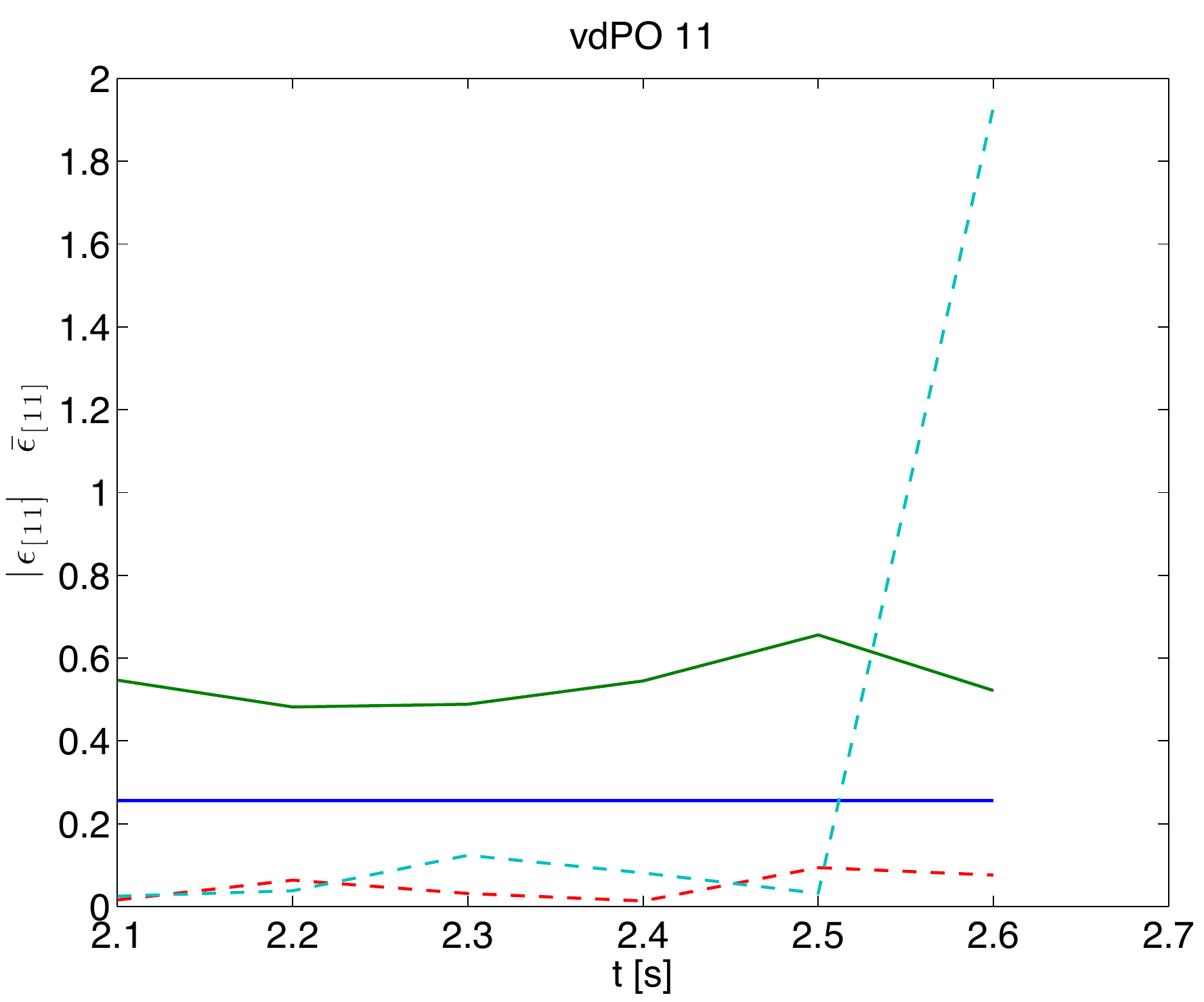}
                 \caption{Dashed lines are the absolute values of errors $\subss\epsilon {11}=\abs{\subss y {11}-\subss{\hat{\tilde{x}}}{11}}$ and bold lines are the thresholds $\subss{\bar\epsilon} {11}$ during the detection of the fault in the 11-th vdPO.}
                 \label{fig:thre}
               \end{figure}

          \subsection{Power Networks System} 
               \label{sec:PNS}
               In this example, we apply the proposed state-feedback PnPMPC and FD scheme to the PNS proposed in Appendix B of \cite{Riverso2014}. In the following we first design the AGC layer for the PNS composed of $5$ areas as in Figure \ref{fig:pns:scenario2}, then we show how, after a fault in area 4, we can disconnect the faulty area (unplugging operation) and redesign the controllers of neighbouring areas (reconfiguration operation).
               \begin{figure}[!htb]
                 \centering
                 \includegraphics[scale=0.65]{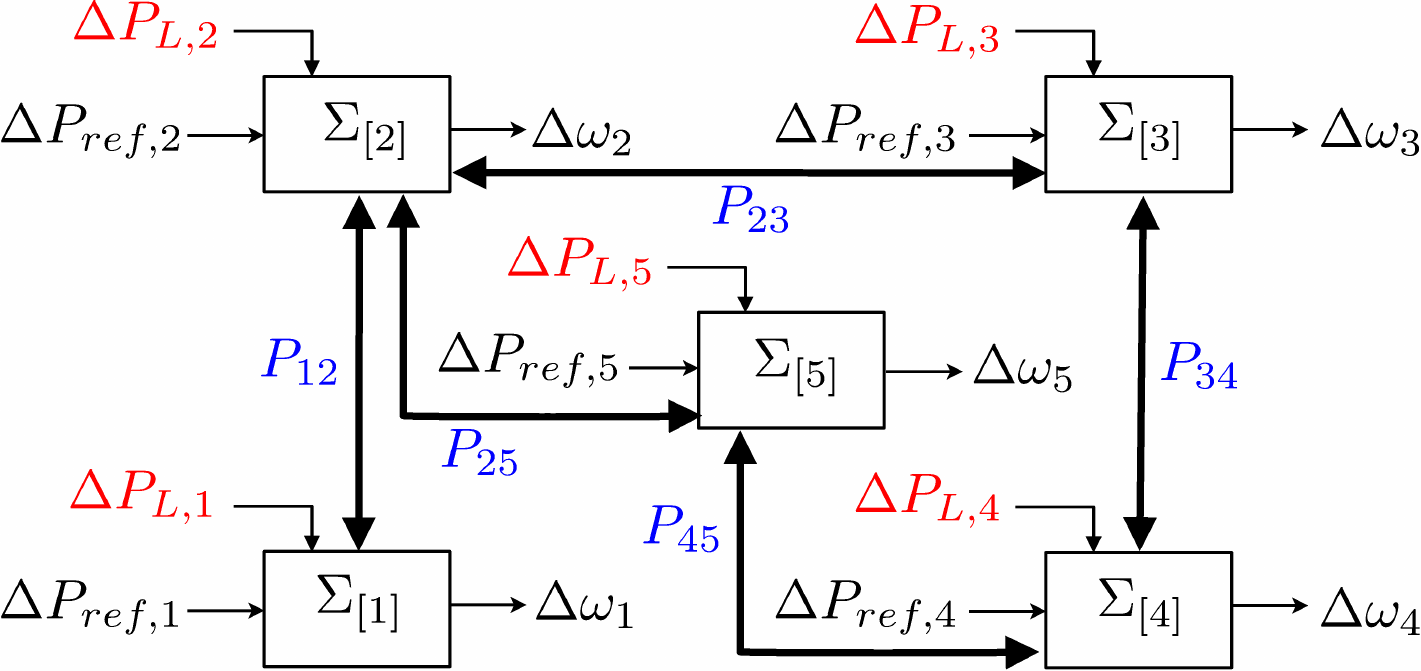}
                 \caption{Power network system of Scenario 2.}
                 \label{fig:pns:scenario2}
               \end{figure}
               
               The dynamics of an area equipped with primary control and linearized around the equilibrium value for all variables can be described by the following model \cite{Saadat2002}
               \begin{equation}
                 \label{eq:ltipower}
                 \subss{\Sigma}{i}^C:\quad\subss{\dot{x}}{i} = A_{ii}\subss x i + B_{i}\subss u i + L_{i}\Delta P_{L_i} + \sum_{j\in\NN_i}A_{ij}\subss x j
               \end{equation}
               where $\subss x i=(\Delta\theta_i,~\Delta\omega_i,~\Delta P_{m_i},~\Delta P_{v_i})$ is the state, $\subss u i = \Delta P_{ref_i}$ is the control input of each area, $\Delta P_L$ is the local power load and $\NN_i$ is the sets of neighbouring areas, i.e. areas directly connected to $\subss\Sigma i^C$ through tie-lines. The matrices of system \eqref{eq:ltipower} are
               $$\begin{array}{c}A_{ii}(\{P_{ij}\}_{j\in\NN_i}) = \matr{ 0 & 1 & 0 & 0 \\ -\frac{\sum_{j\in\NN_i}{P_{ij}} }{2H_i} & -\frac{D_i}{2H_i} & \frac{1}{2H_i} & 0 \\ 0 & 0 & -\frac{1}{T_{t_i}}  & \frac{1}{T_{t_i}} \\ 0 & -\frac{1}{R_iT_{g_i}} & 0 & -\frac{1}{T_{g_i}} }\\
                 B_{i} = \matr{ 0 \\ 0 \\ 0 \\ \frac{1}{T_{g_i}} },\,
                 A_{ij} = \matr{ 0 & 0 & 0 & 0 \\ \frac{P_{ij}}{2H_i} & 0 & 0 & 0 \\ 0 & 0 & 0  & 0 \\ 0 & 0 & 0 & 0 },\,
                 L_{i} = \matr{ 0 \\ -\frac{1}{2H_i} \\ 0 \\ 0 }\end{array}$$
               For the meaning of constants as well as parameter values we refer the reader to Appendix B of \cite{Riverso2014}. We highlight that all parameter values are within the range of those used in Chapter 12 of \cite{Saadat2002}. Model~\eqref{eq:ltipower} is input decoupled since both $\Delta P_{ref_i}$ and $\Delta P_{L_i}$ act only on subsystem $\subss{\Sigma}{i}^C$. Moreover, subsystems $\subss\Sigma i^C$ are parameter dependent since the local dynamics depends on the quantities $-\frac{\sum_{j\in\NN_i}{P_{{ij}}} }{2H_i}$. Each subsystem $\subss{\Sigma}{i}^C$ is subject to constraints on $\Delta\theta_i$ and on $\Delta P_{ref_i}$ specified in Appendix B of \cite{Riverso2014}. We obtain models $\subss\Sigma i$ by discretizing models $\subss\Sigma i^C$ with $1$ sec sampling time, using exact discretization and treating $\subss u i$, $\Delta P_{L_i}$, $\subss x j,~j\in\NN_i$ as exogenous signals. As regards the FDA, each area is equipped with a local FD $\subss{\tilde{\Sigma}}i$ that share some state variables. In particular area 1 and 2 share $\Delta\theta_1$, area 2 and 3 share $\Delta\theta_3$, area 2 and 5 share $\Delta\theta_5$ and area 3, 4 and 5 share $\Delta\theta_4$. We note that the choice of shared variables allow each FD to locally consider the effect of coupling terms and hence, from an electrical point of view, to take into account how tie-line powers are exchanged among areas. Moreover we consider the following bounded measurement errors
               \begin{equation*}
                 \Oset_i=\{ \subss \varrho i\in\Rset^{4}:~\norme{\subss \varrho {i}}{\infty}\leq 10^{-3}\}.
               \end{equation*}
               The modelling of the LSS, the design of PnPMPC controllers and the simulations have been performed using the PnPMPC toolbox for MatLab \cite{Riverso2012g}. For each subsystem $\subss\Sigma i$, the controller $\subss\CC i$, $i\in\MM$ is designed by executing Algorithm \ref{alg:pnpcontrollers}. The aim of the AGC layer is to restore the frequency in each area next to step loads, therefore each controller must be designed in order to stabilize the local area around an equilibria that depends on $\Delta P_{L_i}$. As regards FDA, for each local FD $\subss{\tilde{\Sigma}}i$, the filter parameter $\lambda$ is set to $0.5$.
               
               In the control experiment, step power loads $\Delta P_{L_i}$ specified in Table \ref{tab:pns:simulationscen2} have been used and they cause the step-like changes of the control variables in Figure \ref{fig:simulationscen2}. 
               \begin{table}[!htb]
                 \centering
                 \begin{tabular}{|c|c|c|}
                   \hline
                   Step time  &  Area $i$ & $\Delta P_{L_i}$ \\
                   \hline
                   5               &      1        &   +0.10             \\
                   \hline
                   15             &      2        &   -0.16             \\
                   \hline
                   20             &      1        &   -0.22             \\
                   \hline
                   20             &      2        &   +0.12             \\
                   \hline
                   20             &      3        &   -0.10             \\
                   \hline
                   30             &      3        &   +0.10             \\
                   \hline
                   40             &      4        &   +0.08             \\
                   \hline
                   40             &      5        &   -0.10             \\
                   \hline
                 \end{tabular}
                 \caption{Load of power $\Delta P_{L_i}$ (p.u.) for simulation. $+\Delta P_{L_i}$ means a step of required power, hence a decrease of the frequency deviation $\Delta\omega_i$ and therefore an increase of the power reference $\Delta P_{ref_i}$.}
                 \label{tab:pns:simulationscen2}
               \end{table}
               In Figure \ref{fig:simulationscen2freq} we show, how in presence of loads, the frequency deviation is steered in a neighbourhood of zero: however, due to the presence of measurement errors $\subss\varrho i$ (randomly extracted in the sets $\Oset_i$), $\Delta\omega_i$ cannot be perfectly zeroed. In Figure \ref{fig:simulationscen2ref} we note how the power references $\Delta P_{ref_i}$ are changed in order to compensate for local loads. 
               
               \begin{figure}[!htb]
                 \begin{subfigure}[!htb]{1\textwidth}
                   \centering
                   \includegraphics[scale=0.44]{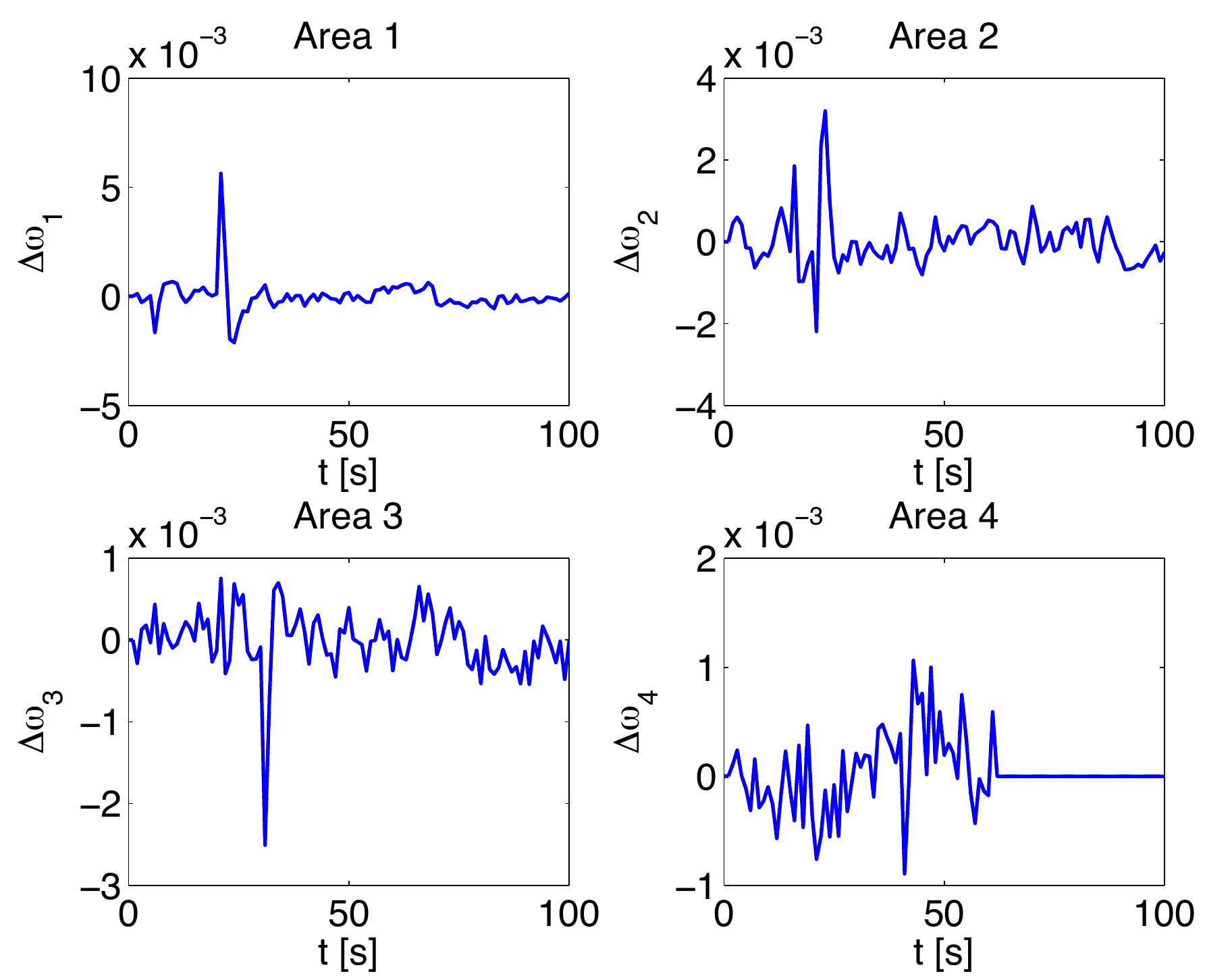}   
                   \includegraphics[scale=0.42]{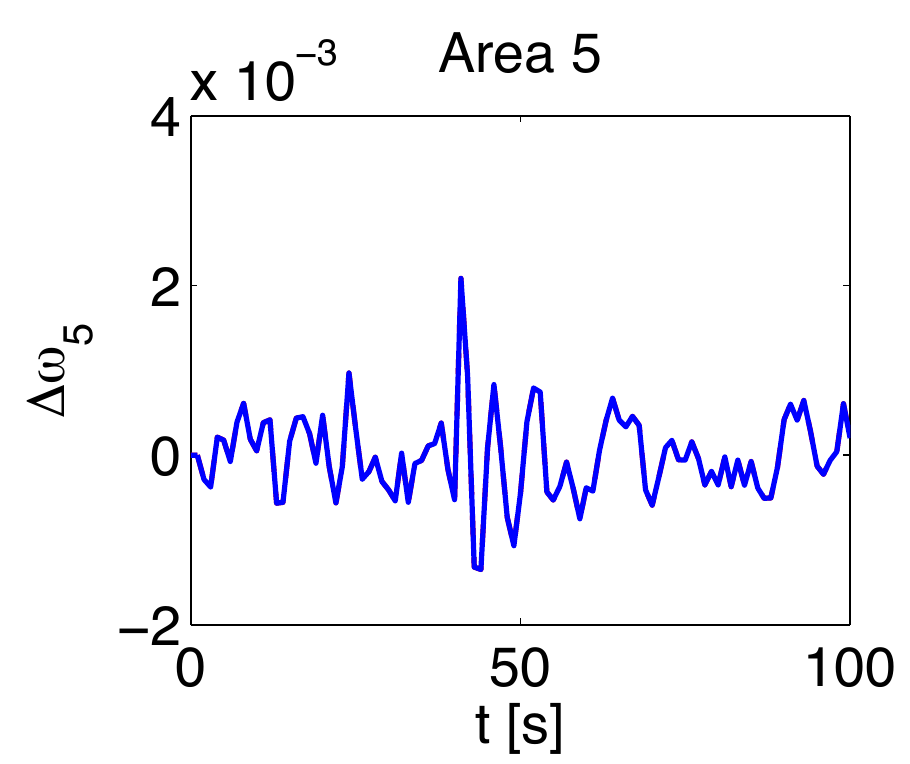}
                   \caption{Frequency deviation in each area controlled by PnPMPC controllers. Note that $\Delta\omega_4=0$ after unplugging of area 4.}
                   \label{fig:simulationscen2freq}
                 \end{subfigure}
                 \begin{subfigure}[!htb]{1\textwidth}
                   \centering
                   \includegraphics[scale=0.44]{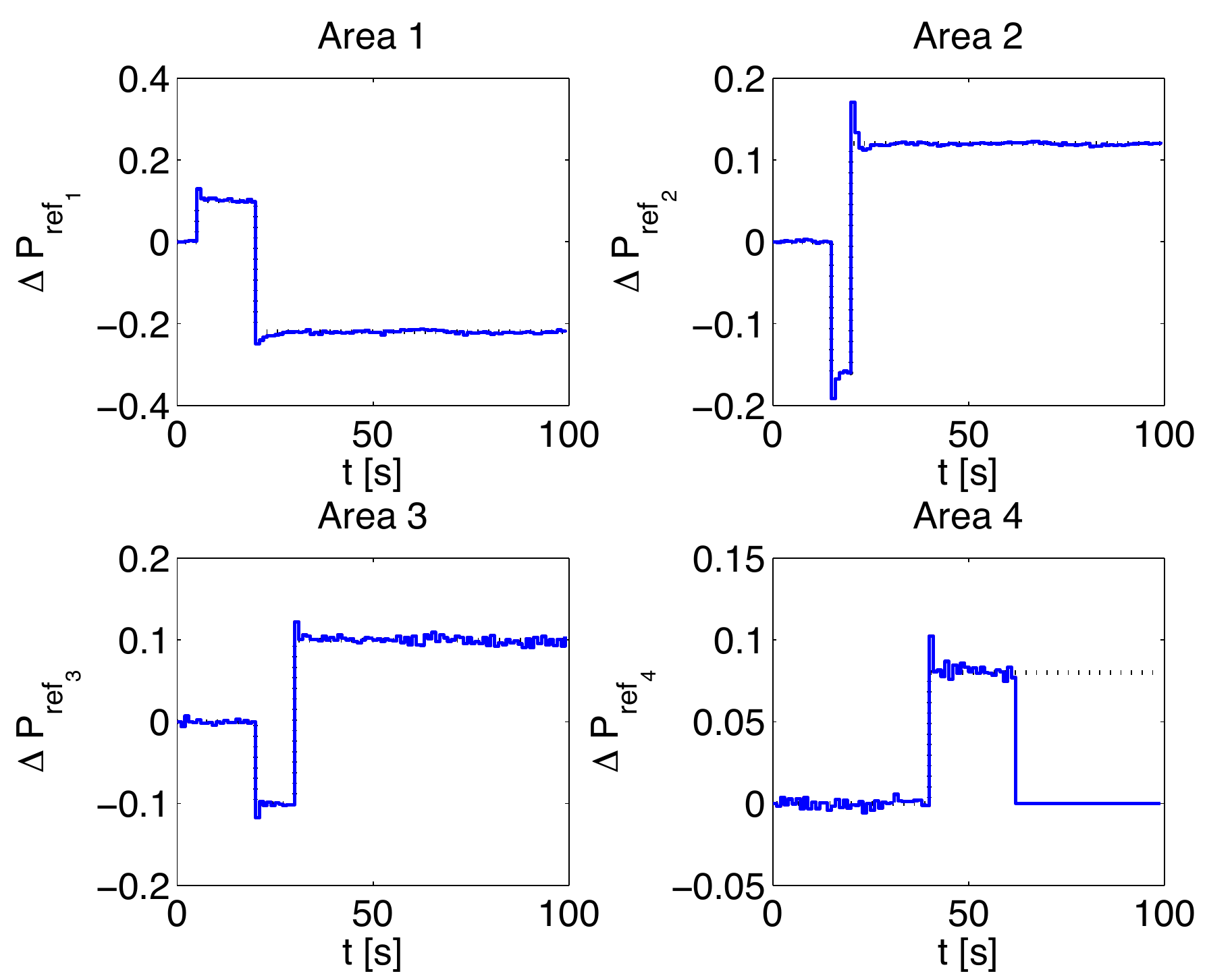}
                   \includegraphics[scale=0.42]{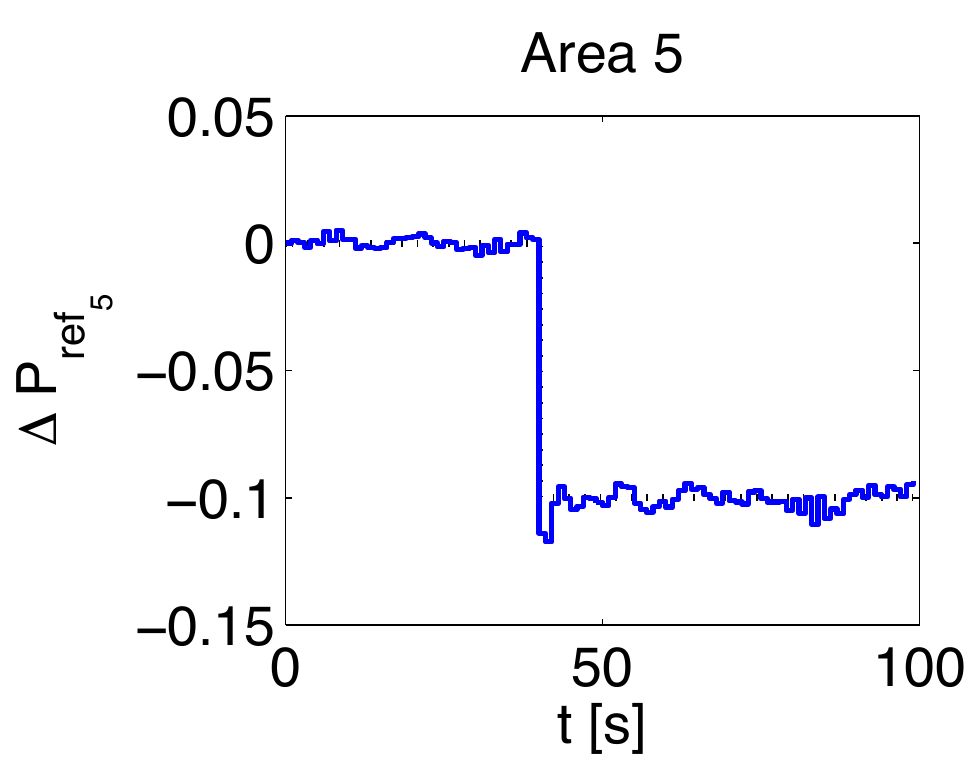}
                   \caption{Load reference set-point in each area controlled by PnPMPC controllers. Note that $\Delta P_{ref_4}=0$ after unplugging of area 4.}
                   \label{fig:simulationscen2ref}
                 \end{subfigure}            
                 \caption{Simulation of a fault in area $4$ at time $t=62$: frequency deviation (panel \ref{fig:simulationscen2freq}) and load reference (panel \ref{fig:simulationscen2ref}) in each area.}
                 \label{fig:simulationscen2}
               \end{figure}
               
               \begin{figure}[!htb]
                 \centering
                 \includegraphics[scale=0.54]{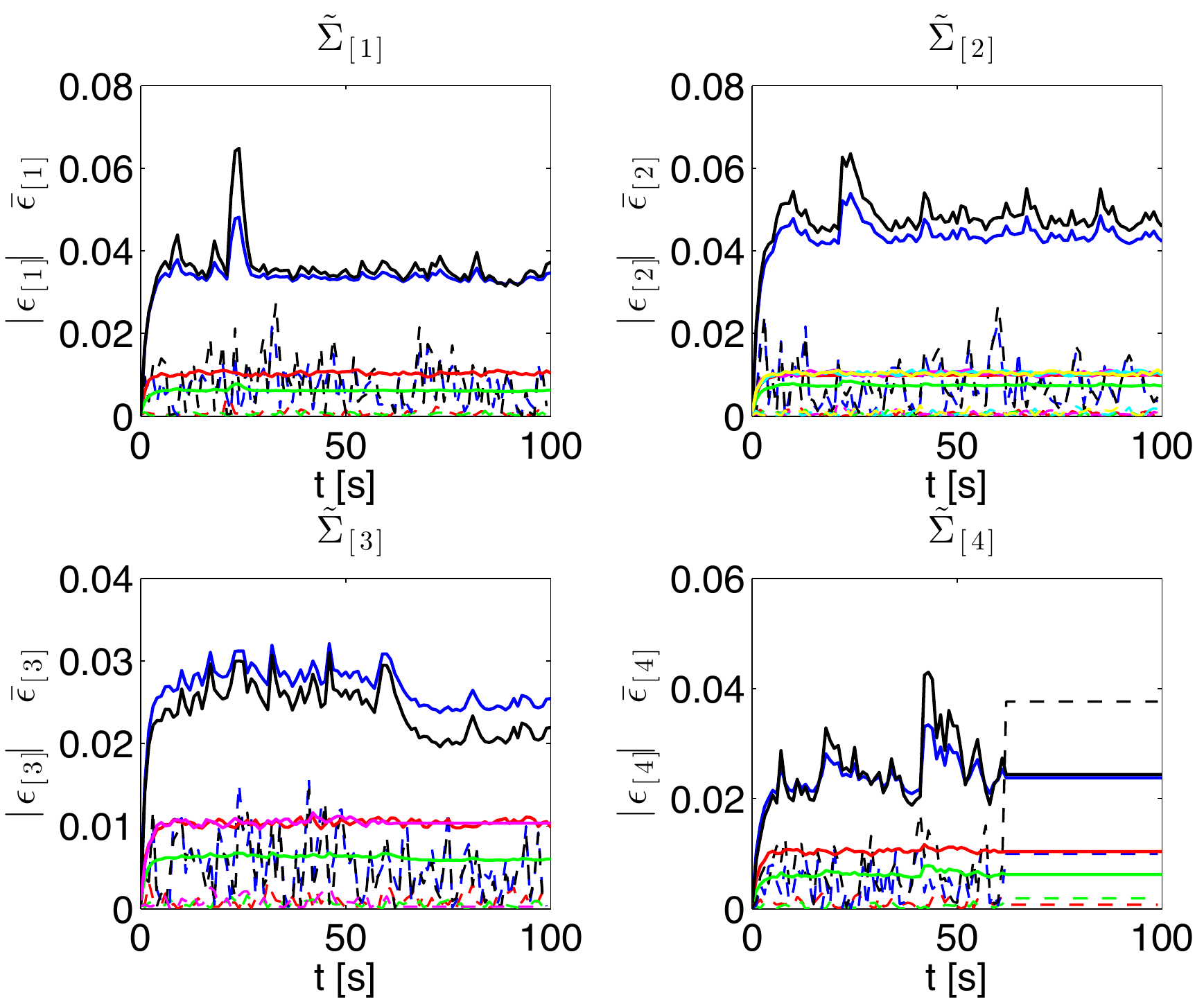}
                 \includegraphics[scale=0.52]{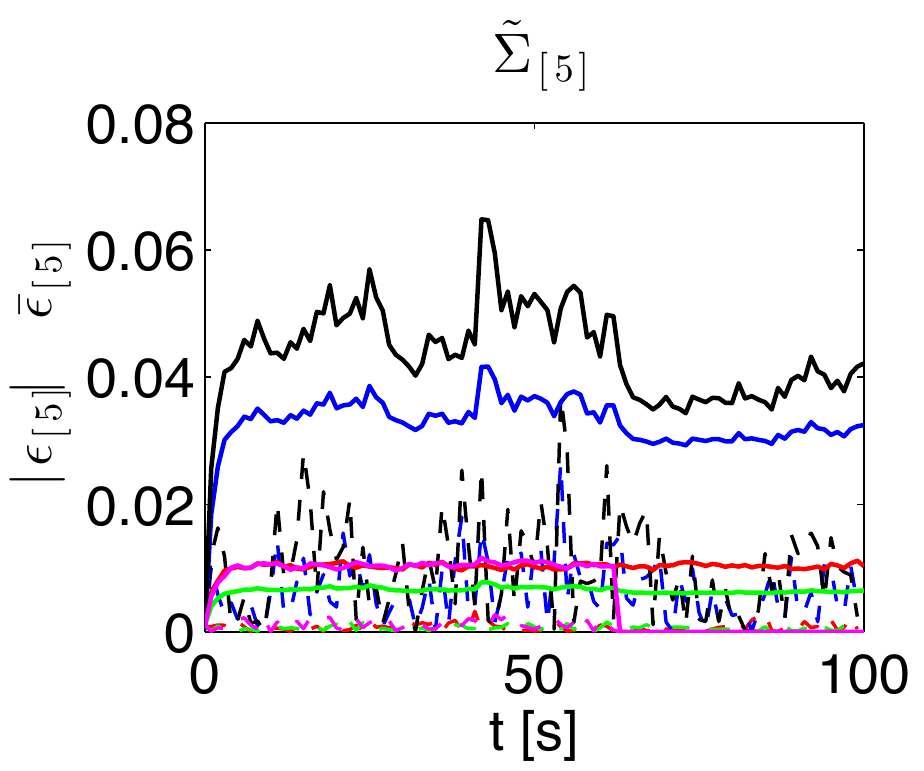}
                 \caption{Simulation: for each area, dashed lines are the absolute values of errors $\subss\epsilon i=\subss y i-\subss{\hat{\tilde{x}}}i$ and bold lines are the thresholds $\subss{\bar\epsilon}i$.}
                 \label{fig:simulationscen2thresh}
               \end{figure}
                             
               At time instant $t=60$, the following fault occurs in area 4: the inertia constant $H_4$ is reduced from $8$ to $1$. From an electrical point of view, there is a fault in a local generator, hence, for safety reasons, area 4 must be isolated in order to not propagate faults in the PNS. In Figure \ref{fig:simulationscen2thresh}\footnotemark[3], we note that for $t<62$, the errors $\abs{\subss\epsilon i}$ are always upper bounded by the thresholds $\subss{\bar\epsilon} i$, hence no faults are detected. At time instant $t=62$, FD $\subss{\tilde\Sigma}4$ detects the fault in area 4, indeed $\abs{\subss\epsilon{\Delta P_{v_4}}}(62)>\subss{\bar\epsilon}{\Delta P_{v_4}}(62)$. Therefore, area 4 is unplugged and controllers $\subss\CC i$ and FDs $\subss{\tilde\Sigma}i$, $i=\{3,5\}$ are retuned. Note that the reconfiguration operation does not involve areas 1 and 2 since they where not connected with area 4 and they did not share any state variables with it. Therefore the reconfiguration process is not propagated in the network. Next to the unplugging of area 4, the new PNS can still compensate power loads and FDs do not detect any fault\footnotemark[3].

               \footnotetext[3]{For the convenience of the reader, in Figure \ref{fig:simulationscen2thresh}, after the reconfiguration process, errors and thresholds involving state variables of area 4 are kept constants for display purposes. After fault detection, the local estimator is stopped.}

     \section{Concluding Remarks}
          \label{sec:conclusion}
	
          In this paper, a novel integrated architecture composed of a decentralized MPC scheme and of a distributed FD architecture has been proposed in the context of fault-tolerant control for a class of large-scale nonlinear systems. The integrated control scheme guarantees closed-loop asymptotic stability and constraints satisfaction at each time instant, while the FD architecture allows to detect faulty subsystems guaranteeing the absence of false-alarms and the convergence the estimators also during reconfiguration processes. The innovative idea is to combine distributed MPC and distributed FD architectures, where local controllers and state estimators can be designed in a PnP fashion, i.e. the overall model of the LSS is never used in any step of the design phase. The proposed architecture is suitable for several large-scale applications, allowing revamping of actuators and isolating faulty subsystems before the fault is propagate in the network. Future research efforts will be devoted to generalizing the approach to a larger class of nonlinear systems and to apply it to different types of LSSs.

     \appendix
     \section{Proof of Theorem \ref{thm:ch9:mainclosedloop}}
          \label{sec:prooftheoremctrl}

          \begin{proof}
            The proof of Theorem \ref{thm:ch9:mainclosedloop} is an adaptation of the proof of Theorem 9 in \cite{Riverso2014a} to the non-linear case. Due to space limitation in \cite{Riverso2014a}, this proof is available in \cite{Riverso2014} as the proof of Theorem 6.1. First, we can easily prove that, if $\subss x i(0)\in\Xset_i^N$, the MPC-$i$ optimization problem defined n \eqref{eq:decMPCProblem} is always feasible and its optimizers $\subss\hx i(0|t)$ and $\subss v i(0|t)$ verify $\subss\hx i(0|t)\rightarrow \Zero_{n_i}$ and $\subss v i(0|t)\rightarrow \Zero_{m_i}$ as $t\rightarrow\infty$. 
            
            Differently from \cite{Riverso2014}, where coupling terms have been defined as linear functions, subsystems $\subss\Sigma i$, $i\in\MM$ defined in this paper take into account nonlinearities in the coupling among subsystems.

            Similarly to Step 1 of the proof of Theorem 6.1 in \cite{Riverso2014}, we aim at showing that if $\subss x i(0)\in\Xset_i^N$ there is $\tilde T>0$ such that $\subss x i(\tilde T)\in\Zset_i$ and hence $\dist{\Zset_i}{\subss x i(\tilde T)}=0$. From \eqref{eq:subsystem} and \eqref{eq:NLtubecontrol}, we can write
            \begin{equation}
              \label{eq:ch7:controlled_model_rewrite}
              \subss x i(t+1)=A_{ii}\subss x i(t)+B_i\bkappa_i(\subss x i(t))+w_i(\subss\psi i(t))+{\bar\eta_i}(t)
           \end{equation}
           where            
           \begin{equation}
             \label{eq:ch7:bareta}
             {\bar\eta_i}(t)={B_i}( \subss v i(t)+\bkappa_i(\subss z i(t))-{\bkappa_i}(\subss x i(t)) )
           \end{equation}
           and $\subss z i(t)=\subss x i(t)-\subss\hx i(0|t)$. In particular, if $\subss x i(0)\in\Xset_i^N$, recursive feasibility of the MPC-$i$ problem \eqref{eq:decMPCProblem} implies that \eqref{eq:ch7:controlled_model_rewrite} holds for all $t\geq 0$.\\
           Note that Step (\ref{enu:AssOmegai}) of Algorithm \ref{alg:pnpcontrollers} guarantees that Assumption 6.3 in \cite{Riverso2014} is verified and therefore, the LP problem (6.14) in \cite{Riverso2014} is feasible for all $\subss z i\in\Rset^{n_i}$. This implies that the function $\bkappa_i(\subss x i(t))$ in \eqref{eq:ch7:controlled_model_rewrite} is always well defined.\\
           From the asymptotic convergence to zero of the nominal state $\subss \hx i(0|t)$ and the input signal $\subss v i(0|t)$, it holds
           \begin{equation}
             \label{eq:ch7:hxvdelta}
             \forall\delta_i>0,~\exists T_{i,1}>0:~\norme{\subss \hx i(0|t)}{}\leq\delta_i\mbox{ and }\norme{\subss v i(0|t)}{}\leq\delta_i,~\forall t\geq T_{i,1}.
           \end{equation}
           Moreover, according to \cite{Gal1995}, we can assume without loss of generality that $\bkappa_i(\cdot)$ is a continuous piecewise affine map. In view of this, $\bkappa_i(\cdot)$ is also globally Lipschitz, i.e.
           \begin{equation}
             \label{eq:ch7:lipschitz}
             \exists~L_i>0~:~\norme{\bkappa_i(\subss x i-\subss\hx i)-\bkappa_i(\subss x i)}{}\leq L_i\norme{\subss\hx i}{}
           \end{equation}
           for all $(\subss x i,\subss\hx i)$ such that $\subss x i\in\Xset_i$ and $\subss x i-\subss\hx i\in\Zset_i$. Using \eqref{eq:ch7:lipschitz} one can show that setting $\delta_i=\frac{\epsilon_i}{\norme{B_i}{}(1+L_i)}$ the following implication holds for all $\epsilon_i>0$:
           \begin{equation*}
             \label{eq:ch7:etaepsilon}
             \norme{\subss \hx i(0|t)}{}\leq\delta_i\mbox{ and }\norme{\subss v i(0|t)}{}\leq\delta_i\Rightarrow\norme{{\bar\eta_i}(t)}{}\leq\epsilon_i,~\forall \subss x i(t)\in\Xset_i.
           \end{equation*}
           Therefore, from \eqref{eq:ch7:hxvdelta},
           \begin{equation}
             \label{eq:ch7:epsilonT}
             \forall\epsilon_i>0,~\exists T_{i,1}>0:~\norme{{\bar\eta_i}(t)}{}\leq\epsilon_i,~\forall t\geq T_{i,1}.
           \end{equation}
           Since $\norme{\subss\hx i(0|t)}{}\rightarrow \Zero_{n_i}$, as $t\rightarrow\infty$, and $\Zset_i$ contains $\ball{\omega_i}(\Zero_{n_i})$ (see Step (\ref{enu:AssOmegai}) of Algorithm \ref{alg:pnpcontrollers}), then
           \begin{equation}
             \label{eq:ch7:deltaz}
             \forall\delta_{z_i}>0,~\exists T_{i,2}>0:~\subss\hx i(0|t)\in\delta_{z_i}\Zset_i,~\forall t\geq T_{i,2}
           \end{equation}
           Hence, from \eqref{eq:inZproblem},
           \begin{equation}
             \label{eq:ch7:xafterT2}
             \subss x i(t)=\subss \hx i(0|t)+(\subss x i(t)-\subss\hx i(0|t))\in(1+\delta_{z_i})\Zset_i,~\forall t\geq T_{i,2}.
           \end{equation}
           From \eqref{eq:ch7:controlled_model_rewrite} we have, for all $i\in\MM$,
           \begin{equation}
             \label{eq:ch7:fromCollectivemodel}
             \subss x i(t+1)=A_{ii}\subss x i(t)+B_i\bkappa_i(\subss x i(t))+\subss{\hat w}i(t)
           \end{equation}
           where $\subss{\hat w} i=w_i(\subss\psi i)+\subss{\bar\eta} i$, $\forall i\in\MM$. Let $\PP_i$ be the map that builds the vector $\subss\psi i$ from $\{\subss x j\}_{j\in\NN_i}$, i.e. $\subss\psi i=\PP_i(\{\subss x j\}_{j\in\NN_i})$ and define $\subss{\hat\psi}i=\{\PP_i(\{\subss x j\}_{j\in\NN_i}):~\subss x j\in (1+\delta_{z_i})\Zset_j\}$. Setting $\bar T=\max_{i\in\MM}\{T_{i,1},T_{i,2}\}$ and $\delta_z=\max_{i\in\MM}\delta_{z_i}$, using \eqref{eq:ch7:epsilonT} and \eqref{eq:ch7:xafterT2}, remembering that $\subss\psi i$ is the vector of coupling variables, one has, $\forall t\geq\bar T$
           \begin{equation}
             \label{eq:ch7:tildewin1plusdeltaz}
             \subss{\hat w} i\in w_i(\hat\Psi_i)\oplus\ball\epsilon_i(\Zero_{n_i}).
           \end{equation}
           From Steps (\ref{enu:AssOmegai})-(\ref{enu:ch9:hXVsetAlg}) of Algorithm \ref{alg:pnpcontrollers}, since $\Psi_i=\{\PP_i(\{\subss x j\}_{j\in\NN_i}):~\subss x j\in\Xset_j\}$, using \eqref{eq:tightconstraint}, we can deduce that $\hat\Psi_i\subset\dot\Psi_i$. Under Assumption \ref{ass:standardAssumCtrl}-(\ref{ass:couplinglimits}), we have
           \begin{subequations}
             \begin{align}
               \label{eq:ch7:setPredecessorsa}&w_i(\hat\Psi_i)\subset \Wset_i=w_i(\dot\Psi_i)
             \end{align}
           \end{subequations}
           Therefore, there is $\xi_i\in[0,1)$ (that does not depend on $\epsilon_i$) such that
           \begin{equation}
             \label{eq:ch7:sumZjinepsW}
             w_i(\hat\Psi_i)\subseteq\xi_i\Wset_i,
           \end{equation}
           and then, from \eqref{eq:ch7:tildewin1plusdeltaz},
           \begin{equation*}
             \label{eq:ch7:tildewindeltazepsW}
             \subss{\hat w}i\in(1+\delta_z)\xi_i\Wset_i\oplus\ball{\epsilon_i}(\Zero_{n_i}),~\forall t\geq\bar T.
           \end{equation*}
           Note that in \eqref{eq:ch7:epsilonT} the parameter $\epsilon_i>0$ can be chosen arbitrarily small. Assume that it verifies $\epsilon_i<(1+\delta_z)\xi_i\bar\omega_i$, $\forall i\in\MM$ where $\bar\omega_i$ are the radii of the balls in Assumption 6.3 in \cite{Riverso2014}. Then, using Assumption 6.3 in \cite{Riverso2014} we get for $t\geq\bar T$
           \begin{equation}
             \label{eq:ch7:wtildei}
             \subss{\hat w}i(t)\in(1+\delta_z)\xi_i(\Wset_i\oplus\ball{\bar\omega_i}(\Zero_{n_i}))\subseteq(1+\delta_z)\xi_i\bZset_i^0.
           \end{equation}
           In view of \eqref{eq:ch7:xafterT2} and \eqref{eq:ch7:wtildei}, Lemma 6.2 in \cite{Riverso2014} guarantees that
           \begin{equation}
             \label{eq:ch7:plusxindeltazZ}
             \subss\xp i\in (1+\delta_z)(\Zset_i\ominus(1-\xi_i)\bZset_i^0)
           \end{equation}
           From Assumption 6.3 in \cite{Riverso2014}, one has $\Zset_i\ominus (1-\xi_i)\bZset_i^0\subset\Zset_i\ominus\ball{(1-\xi_i)\omega_i}(\Zero_{n_i})$ and hence, since $\Zset_i$ contains the origin in its interior, there is $\mu_i\in [0,1)$ such that $\Zset_i\ominus (1-\xi_i)\Zset_i^0\subset\mu_i\Zset_i$. From \eqref{eq:ch7:plusxindeltazZ} we get $\subss\xp i\in (1+\delta_z)\mu_i\Zset_i$.
           If in \eqref{eq:ch7:deltaz} we set $\delta_z$ such that $(1+\delta_z)\mu_i<1$, we have shown that for $t=\bar T$ it holds
           $
           \subss x i(\bar T+1)\in\Zset_i
           $
           and Step 1 of the proof is concluded setting $\tilde T=\bar T+1$. \\
           We highlight that proof of Theorem \ref{thm:ch9:mainclosedloop} can be concluded using Steps 2 and 3 of the proof of Theorem 6.1 in \cite{Riverso2014}. In particular in Step 2 we prove the convergence of the overall state to the origin and in Step 3 we prove stability of the closed-loop overall system. We note that Steps 2 and 3 use the fact that set $\Zset=\prod_{i\in\MM}\Zset_i$ is an RCI set for the overall closed-loop system.
         \end{proof}

     \bibliographystyle{IEEEtran}
     \bibliography{FD_PnP_MPC-report}

\end{document}